\documentclass[aps,pra,reprint,unsortedaddress,superscriptaddress]{revtex4-2}

\usepackage{times}
\usepackage{amsmath, amsfonts, amssymb, amsthm}
\usepackage{braket}     
\usepackage{graphicx}
\usepackage[psdextra]{hyperref} 
\hypersetup{
	colorlinks=true,
	linkcolor=red,      
	anchorcolor=blue,
	citecolor=blue,    
	urlcolor=red        
}
\newtheorem{lemma}{Lemma}
\newtheorem{theorem}{Theorem}
\newtheorem{corollary}[theorem]{Corollary}

\theoremstyle{definition}

\newcommand{\casql}{Laboratory of Quantum Information, University of Science and Technology of China, Hefei, Anhui, 230026, China}

\newcommand{\aihf}{Institute of Artificial Intelligence, Hefei Comprehensive National Science Center, Hefei, Anhui, 230088, China}
\newcommand{\origin}{Origin Quantum Computing Technology (Hefei) Co., Ltd., Hefei, Anhui, 230026, China}

\newcommand{\norm}[1]{\left\lVert#1\right\rVert}
\newcommand{\abs}[1]{\left\lvert#1\right\rvert}
\newcommand{\Ocal}{\mathcal{O}}
\newcommand{\tOcal}{\tilde{\mathcal{O}}}

\begin{document}
	
	\title{A Unified Poisson Summation Framework for Generalized Quantum Matrix Transformations}
	
	\author{Chao Wang}
	\affiliation{\origin}
	\author{Xi-Ning Zhuang}
	\affiliation{\casql}  
	\affiliation{\origin}
	\author{Menghan Dou}
	\affiliation{\origin}
	\author{Zhao-Yun Chen}
	\email{chenzhaoyun@iai.ustc.edu.cn}
	\affiliation{\aihf}
	\author{Guo-Ping Guo}    
	\affiliation{\origin}
	\affiliation{\aihf}
	\affiliation{\casql}
	
	\date{\today}
	
	\begin{abstract}
		We present a unified algorithmic framework for quantum simulation of non-unitary dynamics and matrix functions, governed by the principle of spectral aliasing derived from the Poisson Summation Formula (PSF). By reinterpreting discretization errors as spectral folding in dual domains, we synthesize two distinct algorithmic paths: (i) the Fourier-PSF path, generalizing transmutation methods for time-domain filtering, which is optimal for singular and fractional dynamics $e^{-tH^\alpha}$, here $H\succeq 0$; and (ii) the contour-PSF path, a novel discrete contour transform based on the resolvent formalism, which achieves exponential convergence for holomorphic matrix functions via radius optimization. This dual framework resolves the smoothness-sparsity trade-off: it utilizes the Fourier basis to handle branch-point singularities where analyticity fails, and the Resolvent basis to exploit complex-plane regularity where it exists. We demonstrate the versatility of this framework by efficiently simulating diverse phenomena, from fractional anomalous diffusion to high-precision solutions of stiff differential equations, outperforming existing methods in their respective optimal regimes.
	\end{abstract}
	
	\maketitle
	\section{Introduction}
	
	The advent of quantum computing represents a paradigm shift in both physics and computer science, unlocking computational frontiers that remain inaccessible to classical approaches. This advantage is most notably demonstrated in specific algorithmic breakthroughs, including integer factorization~\cite{doi:10.1137/S0036144598347011}, unstructured search optimization~\cite{10.1145/237814.237866}, and the solution of linear systems~\cite{PhysRevLett.103.150502, doi:10.1137/16M1087072}, alongside the simulation of quantum many-body dynamics~\cite{RevModPhys.86.153, Altman2021qsim, daley2022practical}. Fundamentally, these accelerations stem from the exponential representational capacity of the Hilbert space, which grants quantum computers a distinct advantage in performing large-scale matrix transformations.
	
	Consequently, a wide array of quantum applications ranging from Hamiltonian simulation and quantum linear solvers to quantum walks, which can be unified under the mathematical framework of evaluating matrix functions $f(A)$~\cite{doi:10.1137/16M1087072,1366222,doi:10.1073/pnas.1801723115,doi:10.1126/science.273.5278.1073, 20121122333}. While the quantum singular value transformation (QSVT)~\cite{10.1145/3313276.3316366} has emerged as a powerful unified framework for such tasks, often achieving exponential speedups, its direct applicability to general matrix functions $f(A)$ is constrained. Standard QSVT typically requires the input matrix $A$ to admit a trivial singular value decomposition (e.g., normal or Hermitian matrices) or restricts $f$ to specific forms such as matrix inversion.
	
	The challenge becomes particularly pronounced in the realm of physical simulation. While Hamiltonian simulation for Hermitian operators $H$ is a cornerstone of quantum algorithms~\cite{PhysRevLett.103.150502,doi:10.1137/16M1087072,Berry2007,doi:10.1126/science.273.5278.1073, PhysRevLett.114.090502, farhi2014quantumapproximateoptimizationalgorithm, Lloyd2014}, a vast class of problems necessitates the simulation of non-Hermitian dynamics $e^{-At}$. These range from the heat equation~\cite{Linden2022} and Fokker-Planck dynamics~\cite{pavliotis2014stochastic, Jin_2023} to financial modeling via the Black-Scholes equation~\cite{Gonzalez_Conde_2023} and open quantum systems~\cite{RevModPhys.93.015008, Liu2025simulationofopen, delgado2025quantum}. In these contexts, where typically $\Re(A):=\frac{A+A^\dagger}{2} \succeq 0$, early approaches explored spectral methods~\cite{childs2020spectral}, truncated Dyson series~\cite{Berry_2024}, and time-marching schemes~\cite{Fang2023timemarchingbased}. To further optimize query complexity, subsequent methods embedded these non-Hermitian problems into larger Hermitian frameworks, using techniques such as Schr\"odingerization~\cite{PhysRevA.108.032603, JIN2022111641, jin2025schrodingerizationmethodlinearnonunitary, PhysRevLett.133.230602}, linear combination of Hamiltonian simulations (LCHS)~\cite{PhysRevLett.131.150603,an2023quantumalgorithmlinearnonunitary,low2025optimalquantumsimulationlinear}, and contour-based matrix decomposition (CBMD)~\cite{wang2025quantumsimulationnonunitarydynamics}.

	Simultaneously, to address the more general problem of matrix function transformation $f(A)$, algorithms such as quantum eigenvalue transformation (QET) and methods defined via contour integrals were introduced~\cite{10756112,Takahira_2020,takahira2021quantumalgorithmsbasedblockencoding}. Subsequent advancements have demonstrated that algorithms utilizing contour integration can achieve superior query complexity~\cite{jiang2026contourintegralbasedquantumeigenvalue}. To realize these optimizations, these frameworks typically integrate foundational quantum primitives, such as quantum signal processing (QSP)~\cite{PhysRevLett.118.010501}, QSVT~\cite{10.1145/3313276.3316366}, and linear combination of unitaries (LCU)~\cite{20121122333}, with rigorous mathematical transformations.
	
	While these pioneering algorithms have significantly advanced the field, their specific mathematical formulations naturally define their optimal application scopes. For instance, QET provides a robust framework for general functions by approximating them with Chebyshev polynomials~\cite{10756112}; however, this approach introduces an explicit query complexity dependence of $\mathcal{O}(d \log(1/\epsilon))$ on the polynomial degree $d$, which can become computationally intensive for certain high order transformations~\cite{10756112, Low2026quantumlinearsystem}. Similarly, approaches relying on contour integrals practically necessitate the discretization of continuous integrals, a process that can lead to overestimations of approximation errors and inaccurate evaluations of the required number of LCU terms~\cite{Takahira_2020, takahira2021quantumalgorithmsbasedblockencoding, jiang2026contourintegralbasedquantumeigenvalue}. Furthermore, methods involving the Laplace transform (Lap-LCHS) offer an elegant solution for functions possessing an integrable inverse Laplace transform, though this condition naturally places fundamental cases such as matrix polynomials outside their primary scope~\cite{an2024laplacetransformbasedquantum}. Finally, even highly accurate decomposition strategies like the contour based matrix decomposition (CBMD) are primarily optimized for scenarios where the modulus of the transform function $|f(z)|$ grows at a controlled rate along the real or purely imaginary axes~\cite{wang2025quantumsimulationnonunitarydynamics}. Consequently, as these structural and analytic conditions delineate the current boundaries of methodologies, developing a generalized query complexity analysis that universally accommodates broader applications remains a vital frontier in the design of quantum algorithms.
	
	In this work, we propose a generalized framework that transcends these limitations by identifying the fundamental source of discretization error: \textit{Spectral Aliasing}. By viewing the simulation problem through the lens of harmonic analysis, specifically the PSF~\cite{Miller2004}, we synthesize a dual path framework that operates without relying on structural assumptions like auxiliary differential equations. Specifically, we unify the following two distinct algorithmic paths:
	
	The first path (the Fourier-PSF approach) is tailored for structured non-Hermitian simulation $e^{-At}$ where $A=H^\alpha$ and $H\succeq0$. This generalizes transmutation methods (such as the Kannai transform for $\alpha=2$~\cite{Kannai01011977,jin2026transmutationbasedquantumsimulation}) to arbitrary $\alpha\geq 0.5$. This path is optimal for singular and fractional dynamics ($\alpha \notin \mathbb{Z}$), where standard contour integration fails due to branch points. By mapping the dynamics to an LCU weighted by Fourier coefficients, we efficiently handle heavy tailed distributions (e.g., Lévy flights).
	
	The second path (the contour-PSF approach) introduces a novel discrete contour transform for general holomorphic matrix functions $f(A)$. Crucially, we reinterpret the discretized contour integral not as a quadrature approximation, but as a finite PSF over the cyclic group of roots of unity $\mathbb{Z}_m$. This group theoretic perspective rigorously identifies the approximation error as geometric aliasing in the dual lattice (matrix powers $A^{km}$). This proves that for functions analytic merely on the domain covering the eigenvalues, such as $e^{-At}$ or rational functions, the convergence is exponential. This allows us to achieve high precision with a logarithmically small number of LCU nodes $\tOcal(\log(1/\epsilon))$, contrasting sharply with standard numerical analyses that suggest polynomial costs~\cite{ jiang2026contourintegralbasedquantumeigenvalue}.
	
	By establishing these two complementary paths, our framework reveals that the choice of discretization scheme, whether in the time domain or the complex plane, is strictly dual to the choice of the basis set best suited for the analytic properties of the function. Crucially, our method requires $f(z)$ to be holomorphic only within a neighborhood of the spectrum, thereby relaxing the strict global analyticity assumptions often found in prior works. Collectively, our analysis uncovers a fundamental smoothness sparsity tradeoff: the regularity of the target function in the dual domain strictly dictates the sparsity of the LCU decomposition in the primary domain. This unified framework not only simplifies the understanding of existing methods but also provides a rigorous guide for selecting the quantum algorithm to simulate diverse physical phenomena efficiently.

	\section{Preliminaries and General Theory}
	\label{sec:preliminaries}
	
	\subsection{Notation and Quantum Primitives}
	\label{subsec:primitives}
	
	Our algorithm relies on the standard oracle model for accessing the generator and employs the LCU~\cite{20121122333} and QSVT~\cite{10.1145/3313276.3316366} as the central algorithmic backbone.
	
	\subsubsection{Block-Encoding and Matrix Arithmetics}
	We assume access to the generator via the block-encoding framework. To avoid confusion with the decay order $\alpha$ used throughout this paper, we denote the factor of the block-encoding as $\alpha_H\geq \norm{H}$. Formally, a unitary $U_H$ is a $(\alpha_H, m, 0)$-block-encoding of a Hermitian matrix $H$ if it acts on $m$ ancilla qubits such that:
	\begin{equation}
		H = \alpha_H (\langle 0 |^{\otimes m} \otimes I) U_H (| 0 \rangle^{\otimes m} \otimes I).
	\end{equation}
	Efficient constructions of such oracles for sparse or structured matrices are well-established using fast approximate quantum circuits~\cite{Camps_2022, Sunderhauf2024blockencoding}. 
	
	Leveraging the QSVT or QSP framework~\cite{PhysRevLett.118.010501, 10.1145/3313276.3316366}, the time-evolution operator $\cos(Ht)$ can be implemented with optimal query complexity. Specifically, simulating $H$ for time $t$ with precision $\epsilon$ requires $\mathcal{O}(\|H\|t + \log(1/\epsilon))$ queries to $U_H$~\cite{20121122333}. In our framework, these unitary evolutions constitute the basis operations in the LCU decomposition. 
	
	In the context of approximate matrix inversion schemes, for a matrix $A$ satisfying $\|A\| \leq 1$ and $\|A^{-1}\| \leq 1/\delta$, there exists a polynomial of degree $\Ocal\left(\frac{1}{\delta}\log\frac{1}{\epsilon}\right)$ that approximates $\frac{3\delta}{4}A^{-1}$ with a spectral norm error bounded by $\epsilon$~\cite{10.1145/3313276.3316366}.
	
	\subsubsection{Linear Combination of Unitaries }
	To realize the non-unitary operator $V \approx \sum_{k} c_k U_k$, where $c_k \in \mathbb{C}$ are complex coefficients and $U_k$ are unitary operators, we employ the LCU technique~\cite{20121122333}. This method requires constructing two oracle operators:
	
	State preparation ($O_{c,l}$ and $ O_{c,r}$): Encodes the coefficients into the amplitude of an ancilla state.
	\begin{equation}
		\begin{split}
			O_{c,l} |0\rangle = \frac{1}{\sqrt{\|{c}\|_1}} \sum_{k}\overline{ \sqrt{c_k} }|k\rangle,\\
			O_{c,r} |0\rangle = \frac{1}{\sqrt{\|{c}\|_1}} \sum_{k}{ \sqrt{c_k} }|k\rangle,
		\end{split}
	\end{equation}
	where $\|{c}\|_1 = \sum_k |c_k|$ is the $L^1$-norm of the coefficients.
	
	Select oracle ($\mathrm{SEL}$): Applies the target unitary conditional on the ancilla register.
	\begin{equation}
		\mathrm{SEL} := \sum_{k} |k\rangle\langle k| \otimes U_k.
	\end{equation}
	
	The operation $(O_{c,l}^\dagger \otimes I)\mathrm{SEL}(O_{c,r}\otimes I)\ket{u_{0}}$ implements the target linear combination upon measuring $|0\rangle$ in the ancilla register:
	\begin{equation}
		\frac{1}{\|c\|_1}\ket{0} \sum_k c_k U_k |u_{0}\rangle +\ket{\perp}.\nonumber
	\end{equation}
	The success probability of this post-selection is about $ (\|V\ket{u_{0}}\| / \|c\|_1)^2$. Since non-unitary dynamics often imply $\|V\ket{u_{0}}\|< 1$, the total complexity is scaled by the overhead of oblivious amplitude amplification~\cite{PhysRevLett.114.090502}, which introduces a factor of $\mathcal{O}(\|c\|_1 / \|V\ket{u_{0}}\|)$ to the query count. Therefore, minimizing the coefficient 1-norm $\|c\|_1$ (sparsity) is critical for algorithmic efficiency.

	\subsection{Problem Formulation}
	\label{subsec:problem_formulation}
	Our primary objective is to simulate the time evolution of a quantum state $u_T:=u(T)$ governed by a generalized dissipative or dispersive equation. Specifically, we aim to implement the non-unitary evolution operator corresponding to the $\alpha$-th order decay:
	\begin{equation}\label{eq:origin_problem}
		u(T) = e^{-T {H}^{\alpha}} u_0= e^{-T A} u_0, \quad \alpha > 0,
	\end{equation}
	where $H\succeq 0$ is a Hermitian operator. This formulation encompasses the standard heat equation ($\alpha=2$), biharmonic diffusion ($\alpha=4$), and fractional anomalous diffusion ($\alpha \notin \mathbb{Z}$ and $\alpha \geq 0.5$).
	
	We assume access to the generator through a root operator $H$, which is a Hermitian matrix satisfying $H^\alpha = A$. Practically, we assume access to a block-encoding $U_H$ of $H$.
	
	More generally, our framework extends to the simulation of a broad class of matrix functions beyond the specific power-law decay. We consider the transformation of an initial quantum state $|\psi\rangle$ under a general matrix function $f(A)$. Formally, we consider a normal matrix $A \in \mathbb{C}^{N \times N}$, accessible via its unitary block-encoding oracle $U_A$. Given a scalar function $f: \mathcal{D} \to \mathbb{C}$, where $\mathcal{D} \subset \mathbb{C}$ is a holomorphic domain containing the spectrum $\sigma(A)$, our goal is to prepare an $\epsilon$-approximate normalized target state:
	\begin{equation}
		\frac{f(A)|\psi\rangle}{\|f(A)|\psi\rangle\|}
	\end{equation}
	with success probability $\Omega(1)$. This formulation provides a unified framework for diverse quantum simulation tasks: explicitly, setting $f(z) = e^{-izt}$ recovers standard Hamiltonian simulation, while $f(z) = e^{-zt}$ (with $\Re(z) \ge 0$) corresponds to generalized dissipative dynamics. Furthermore, this setting extends to broad classes of matrix arithmetic operations where $f(z)$ represents rational functions or fractional powers. The algorithmic challenge lies in constructing a quantum circuit that achieves this mapping with optimal query complexity to $U_A$, strictly determined by the analytic properties of $f$ on the spectral domain $\sigma(A)$.

	\section{Mathematical Foundation: A Unified Poisson Summation Framework}
	\label{sec:math_foundation}
	
	The theoretical cornerstone of our framework is the Poisson summation principle, which rigorously connects the discrete sampling of a function to the spectral repetitions of its transform~\cite{stein1971introduction}. We demonstrate that this principle governs the discretization error in two complementary regimes: the continuous time-frequency domain via the standard PSF, and the complex spectral domain via a novel finite Poisson summation over cyclic groups.
	
	\subsection{The Continuous Regime: Fourier-PSF for Hamiltonian Simulation}\label{sec:fpsf}
	
	Let $f(x)$ be a function in the Schwartz space or with sufficient decay~\cite{stein1971introduction}, and let its Fourier transform be defined as $\hat{f}(\xi) = \int_{\mathbb{R}} f(x) e^{-2\pi i \xi x} dx$. The generalized PSF, tailored for our discretization scheme with a scaling parameter $a > 0$ and shift $\delta \in \mathbb{R}$, states:
	\begin{equation} \label{eq:poisson_scalar_scaled}
		\frac{1}{a} \sum_{k \in \mathbb{Z}} f(k/a) e^{-i 2\pi k \delta/a} = \sum_{n \in \mathbb{Z}} \hat{f}(an + \delta).
	\end{equation}
	This identity establishes a fundamental duality: the sampling rate in the time domain (controlled by $a$) dictates the separation of spectral copies in the frequency domain.
	
	Invoking the spectral mapping theorem~\cite{PhysRevLett.131.150603}, we promote the scalar argument $\delta$ to a Hermitian operator ${H}\succeq 0$. This yields the operator-valued identity that forms the basis of our LCU approach:
	\begin{equation} \label{eq:poisson_operator}
		\begin{split}
			&\underbrace{\frac{1}{a} \sum_{|k| \leq K} f(k/a) e^{-i \frac{2\pi k}{a} {H}}}_{\text{LCU Implementation}} + \underbrace{\frac{1}{a} \sum_{|k| > K} f(k/a) e^{-i \frac{2\pi k}{a} {H}}}_{\text{Truncation Error}} 
			\\
			&=\underbrace{\hat{f}({H})}_{\text{Target}} + \underbrace{\sum_{n \neq 0} \hat{f}({H} + naI)}_{\text{Aliasing Error}}.
		\end{split}
	\end{equation}
	
	As will be seen later, in this paper we only consider the case where $f$ and $\hat{f}$ are even. Consequently, the above equation simplifies to:
	\begin{equation} \label{eq:poisson_operator_2}
		\begin{split}
			&\underbrace{\frac{1}{a} \sum_{|k| \leq K} f(k/a) \cos (
				\frac{2\pi k}{a}\sqrt{H})}_{\text{LCU Implementation}} + \underbrace{\frac{1}{a} \sum_{|k| > K} f(k/a) e^{-i \frac{2\pi k}{a} \sqrt{H}}}_{\text{Truncation Error}} 
			\\
			&=\underbrace{\hat{f}(\sqrt{H})}_{\text{Target}} + \underbrace{\sum_{n \neq 0} \hat{f}(\sqrt{H} + naI)}_{\text{Aliasing Error}}.
		\end{split}
	\end{equation}

	The significance of Eq.~\eqref{eq:poisson_operator_2} is twofold. The left-hand side represents the implementable LCU, where the constituent operators $\cos\left(\frac{2\pi k}{a}\sqrt{H}\right)$ are weighted by coefficients $c_k = f(k/a)/a$. Although $\cos\left(\frac{2\pi k}{a}\sqrt{H}\right)$ is an entire function, directly implementing it via QSVT using the block-encoding of $\tilde{H} = H/\|H\|$ incurs a prohibitive overhead. This limitation arises because the corresponding polynomial is not bounded by a constant across the entire required working domain $x \in [-1, 1]$. Specifically, for $x < 0$, the function analytically continues to a hyperbolic cosine, $\cosh\left(\frac{2\pi k}{a}\sqrt{\|H\||x|}\right)$, which grows exponentially and violently violates the QSVT norm constraint bounded by $1$.
	
	To circumvent this norm-bounding issue and construct a rigorously QSVT-compliant polynomial, we employ a domain-shifting technique. By substituting the input model, we can evaluate the modified polynomial $\cos\left(\frac{2\pi k \sqrt{\|H\|}}{a}\sqrt{\frac{x+1}{2}}\right)$, which necessitates the block-encoding of $2\tilde{H}-I$. Alternatively, one can directly query the block-encoding of $\sqrt{\tilde{H}}$ to process the polynomial $\cos\left(\frac{2\pi k\sqrt{\|H\|}}{a} x\right)$. Crucially, both reformulated polynomials map the operational domain strictly into $[-1, 1]$ and maintain an absolute value bounded by $1$ for all $x \in [-1, 1]$, making them inherently suitable for the QSVT framework. 
	
	Furthermore, constructing the block-encoding of $2\tilde{H}-I$ from $\sqrt{\tilde{H}}$ requires only two queries via a quadratic polynomial transformation (i.e., the Chebyshev polynomial $2x^2-1$). Therefore, assuming access to $2\tilde{H}-I$ is computationally equivalent and not strictly more demanding than assuming access to $\sqrt{\tilde{H}}$. This structural adjustment allows us to formalize the oracle query complexity for each LCU component as follows:
	
	\begin{lemma}\label{lemma:qsvt_complexity}
		Let $\tilde{H} = H/\|H\|\succeq 0$ be the normalized Hamiltonian with its spectrum bounded in $[0, 1]$. Assuming access to the block-encoding oracle of either $2\tilde{H}-I$ or $\sqrt{\tilde{H}}$, implementing the target operator $\cos\left(\frac{2\pi k}{a}\sqrt{H}\right)$ to precision $\epsilon$ via QSVT requires a query complexity of:
		\begin{equation}
			\mathcal{O}\left( \frac{k}{a}\sqrt{\|H\|} + \log\left(\frac{1}{\epsilon}\right) \right)
		\end{equation}
		calls to the respective input oracle.
	\end{lemma}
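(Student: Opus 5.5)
Let $\tau := \frac{2\pi k}{a}\sqrt{\|H\|}$, so that $\cos\left(\frac{2\pi k}{a}\sqrt{H}\right) = \cos(\tau \sqrt{\tilde H})$. The plan is to reduce both oracle models to the standard Jacobi--Anger truncation of $\cos(\tau x)$ on $[-1,1]$. That truncation is the same one used for Hamiltonian simulation in the QSVT framework~\cite{10.1145/3313276.3316366}. We then invoke QSVT for real even polynomials bounded by $1$.

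\textbf{Oracle $\sqrt{\tilde H}$.} We use the Jacobi--Anger expansion
\begin{equation}
\cos(\tau x) = J_0(\tau) + 2\sum_{j\ge 1} (-1)^j J_{2j}(\tau)\, T_{2j}(x).
\end{equation}
We truncate at degree $2R$. The standard Bessel tail estimate $|J_n(\tau)| \le (e\tau/2n)^n/\sqrt{2\pi n}$ bounds the truncation error uniformly on $[-1,1]$ by $\epsilon/2$, provided $R = \Ocal(\tau + \log(1/\epsilon))$. This is the step needing care: one must check the regimes $\tau \gtrsim \log(1/\epsilon)$ and $\tau \lesssim \log(1/\epsilon)$ separately, exactly as in the standard analysis of $\cos(tx)$ in the QSVT paper. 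We then rescale the truncated polynomial $P$ by $(1+\epsilon/2)^{-1}$ so that $|P|\le 1$ on $[-1,1]$; this costs only a further $\Ocal(\epsilon)$ error. The polynomial is real and even, so QSVT applied to the block-encoding of $\sqrt{\tilde H}$ implements $P(\sqrt{\tilde H})$. This uses $\deg P = \Ocal(\tau + \log(1/\epsilon))$ queries. Since the spectrum of $\sqrt{\tilde H}$ lies in $[0,1]\subset[-1,1]$, the spectral norm error is $\Ocal(\epsilon)$.

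\textbf{Oracle $2\tilde H - I$.} Since $P$ is even, we can write $P(x) = Q(2x^2-1)$ with $Q(y) = J_0(\tau) + 2\sum_{j=1}^R (-1)^j J_{2j}(\tau) T_j(y)$. This uses $T_{2j}(x) = T_j(T_2(x))$. The degree of $Q$ is $R = \Ocal(\tau+\log(1/\epsilon))$. For $y\in[-1,1]$, set $x = \sqrt{(y+1)/2}\in[0,1]$. Then $Q(y) = P(x)$, which approximates $\cos(\tau\sqrt{(y+1)/2})$ within $\epsilon$ and is bounded by $1$.

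Applying QSVT to the block-encoding of $2\tilde H - I$ with the real polynomial $Q$ yields $Q(2\tilde H - I) \approx \cos(\tau\sqrt{\tilde H})$. This holds because $\sqrt{((2\tilde H - I)+I)/2} = \sqrt{\tilde H}$ by functional calculus. Parity is not an issue, since $Q$ has indefinite parity only through the standard real-part construction. We can handle that either by QSVT for polynomials of definite parity applied separately to the even and odd parts of $Q$, combined by an LCU with constant overhead, or by directly quoting the general real-polynomial QSVT theorem.

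Substituting $\tau = \frac{2\pi k}{a}\sqrt{\|H\|}$ then gives the claimed $\Ocal\left(\frac{k}{a}\sqrt{\|H\|} + \log(1/\epsilon)\right)$ query count in both models. The main obstacle is the uniform truncation bound for Jacobi--Anger with the additive (not multiplicative) dependence on $\tau$ and $\log(1/\epsilon)$. I would import it directly from the known $\cos(tx)$ approximation result rather than reprove it.
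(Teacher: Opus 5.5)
Your proposal is correct and follows the same route as the paper's proof. Both set $\tau=\frac{2\pi k}{a}\sqrt{\|H\|}$, reduce both oracle models to the Jacobi--Anger truncation of $\cos(\tau x)$, and invoke QSVT to obtain degree $\Ocal(\tau+\log(1/\epsilon))$. Your identity $T_{2j}(x)=T_j(2x^2-1)$ supplies the justification the paper leaves implicit for why $\cos\bigl(\tau\sqrt{(x+1)/2}\bigr)$ has such a low-degree bounded approximation, and your even/odd split (despite the garbled remark about the ``real-part construction'') correctly handles the indefinite parity of $Q$ at constant cost.
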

	\begin{proof}
		Let the scaled phase factor be $\tau = \frac{2\pi k \sqrt{\|H\|}}{a}$. When utilizing the block-encoding of $\sqrt{\tilde{H}}$, the target polynomial is exactly $\cos(\tau x)$. Alternatively, when utilizing $2\tilde{H}-I$, the target polynomial is $\cos\left(\tau \sqrt{\frac{x+1}{2}}\right)$. According to standard QSVT theorems and the Jacobi-Anger expansion bounds~\cite{10.1145/3313276.3316366}, approximating a trigonometric function oscillating with frequency $\tau$ to a spectral norm error of $\epsilon$ requires a polynomial degree $d$ scaling linearly with the argument $\tau$ and logarithmically with the inverse precision. Consequently, the required polynomial degree, which directly corresponds to the query complexity, evaluates to $d = \mathcal{O}(\tau + \log(1/\epsilon))$, yielding the stated bound.
	\end{proof}
	
	Consequently, the overall simulation accuracy is governed by the trade-off between the aliasing error $\sum_{n \neq 0} \hat{f}(\sqrt{H} + naI)$ and the truncation error. A larger scaling factor $a$ mitigates aliasing by separating spectral copies but necessitates a proportionally larger truncation ratio $K/a$ to resolve high-frequency oscillations. By the Paley-Wiener theorem~\cite{paley1934fourier}, the smoothness of $\hat{f}$ guarantees the rapid decay of $f(k/a)$, thereby efficiently suppressing the truncation error and allowing the algorithm to maintain optimal query complexity.
	
	\subsection{The Discrete Regime: Contour-PSF for Matrix Functions}
	
	The spectral aliasing mechanism described above is not unique to the Fourier domain. To rigorously simulate general matrix functions $f(A)$ where $A$ may not be Hermitian, we extend the PSF framework to finite groups via a novel discrete contour integration scheme. 
	
	Classically, the matrix function is defined via the Cauchy integral formula 
	\begin{equation}
		f(A) = \frac{1}{2\pi i} \oint_{\Gamma} f(z)(zI - A)^{-1} \, dz.
	\end{equation}
	We modify this paradigm by discretizing the contour $C_{R_1}$ (a circle enclosing $\sigma(A)$) using the roots of unity. We analyze the weighted contour integral over an auxiliary outer boundary $C_{R_2}$ ($R_2 > R_1$):
	\begin{equation}
		\mathcal{I} = \frac{1}{2\pi i} \oint_{C_{R_2}} g(z) f(z)(zI - A)^{-1} \, dz,
	\end{equation}
	utilizing a filter kernel $g(z) = \frac{z^{m}}{z^{m}-R_{1}^m}$ designed for the cyclic group $\mathbb{Z}_m$. The poles of $g(z)$ on $C_{R_1}$ form the sampling lattice. By applying the residue theorem, we derive the finite Poisson summation identity:
	\begin{equation}
		\label{eq:finite_poisson_identity}
		\begin{split}
			&\underbrace{\frac{1}{m}\sum_{k=1}^{m} w_{k}f(w_{k})(w_{k}I-A)^{-1}}_{\text{Discrete Sampling (QSVT+LCU)}} 
			= 
			\underbrace{f(A)}_{\text{Target}} -\underbrace{f(A)g(A)}_{\text{Geometric Aliasing}} 
			\\
			&+ 
			\underbrace{\frac{1}{2\pi i} \oint_{C_{R_2}}\frac{R_{1}^{m}}{z^{m}-R_{1}^{m}} f(z)(zI - A)^{-1} \, dz}_{\text{Analytic Truncation}},
		\end{split}
	\end{equation}
	where $w_k = R_1 e^{i \frac{2\pi k}{m}}$, and Eq.~\eqref{eq:finite_poisson_identity} is visualized in Fig.~\ref{fig:contour_schematic}.
	
	\begin{figure}[htbp]
		\centering
		\includegraphics[width=0.45\textwidth]{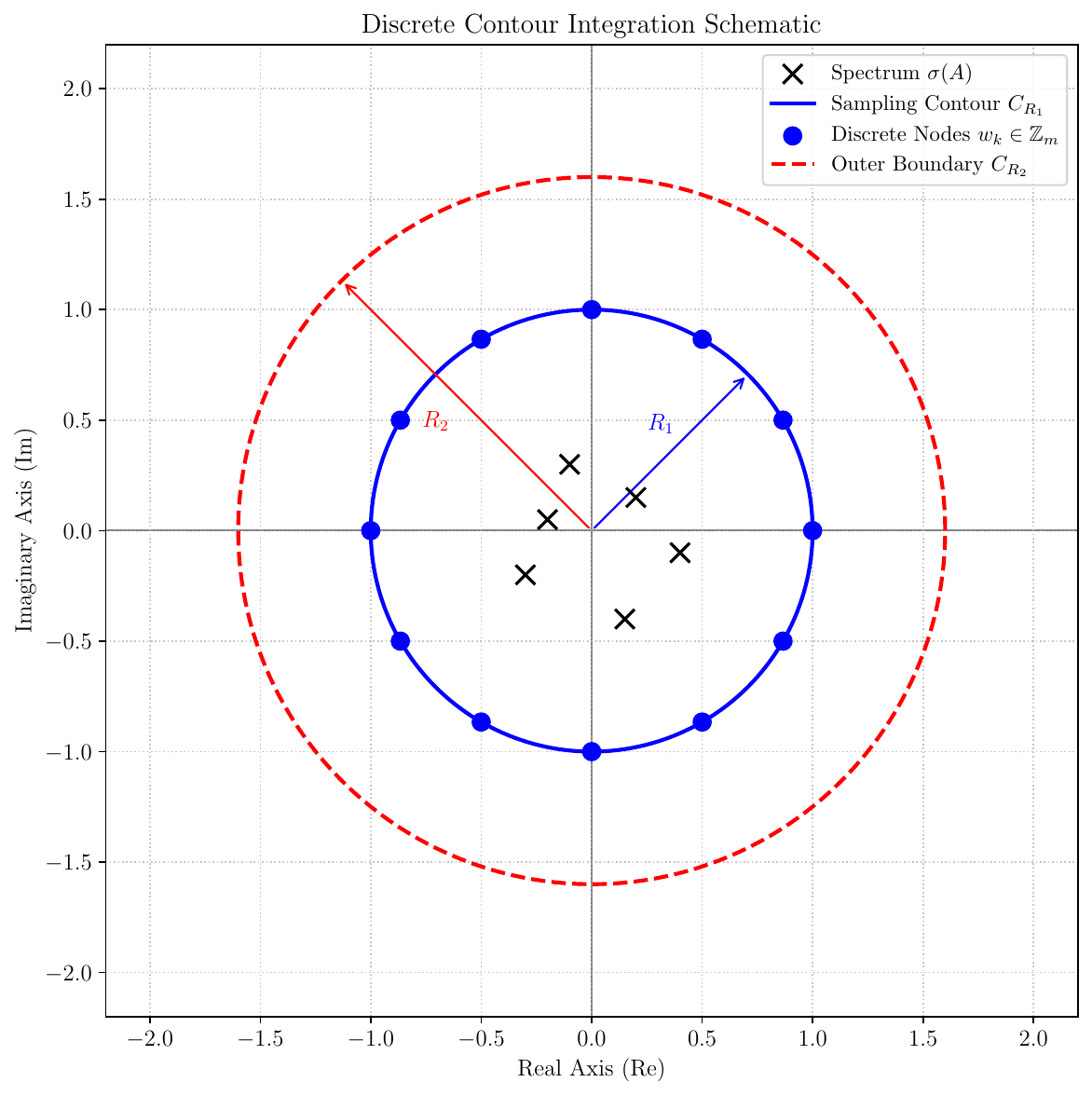}
		\caption{Schematic of the discrete contour integration framework in the complex plane. The black crosses represent the spectrum $\sigma(A)$ enclosed by the sampling contour $C_{R_1}$ (blue). Discrete nodes $w_k$ (blue dots) represent the finite Poisson summation points. The outer boundary $C_{R_2}$ (red dashed line) defines the domain for analytic truncation error estimation as formulated in Eq.~\eqref{eq:finite_poisson_identity}.}
		\label{fig:contour_schematic}
	\end{figure}
	
	This result mirrors Eq.~\eqref{eq:poisson_operator} in the complex domain, decomposing the approximation error into two analogous components. The first is the geometric aliasing error, represented by the term $f(A)g(A)$ arising from the kernel residue at $A$. In the group-theoretic perspective, this constitutes the spectral folding induced by sampling on the subgroup $\mathbb{Z}_m$, corresponding to the summation over the dual lattice (matrix powers $A^{lm}$). The second component is the analytic truncation error, defined by the integral over the outer boundary $C_{R_2}$, which captures the residual error from the finite domain and is analogous to the time-domain tail truncation in the continuous case.
	
	In summary, whether in the continuous Fourier domain or the discrete complex plane, the unified Poisson summation principle allows us to rigorously bound simulation errors by balancing sampling density (suppressing aliasing) against domain size (suppressing truncation).

	\section{Path A: Fourier-PSF for Non-Hermite Dynamics}
	\label{sec:path_a_fourier}

	\subsection{Transmutation as Time-Domain Filtering}
	
	In the specific context of simulating non-Hermitian dissipative dynamics, our subsequent error analysis is strictly grounded in the operator identity derived in Eq.~\eqref{eq:poisson_operator}. This formulation explicitly exposes distinct error sources on both the left-hand side (truncation error) and the right-hand side (aliasing error), implying that the rigorous containment of the total simulation error necessitates a balanced optimization of both the scaling parameter $a$ and the cutoff $K$.
	
	The computational complexity of our framework is governed largely by the \textit{truncation ratio} $K/a$, rather than the scaling parameter $a$ alone. In the context of the LCU, the algorithm involves applying a superposition of time-evolution operators of the form $\cos (
	\frac{2\pi k}{a}\sqrt{H})$. The maximum simulation time required in the quantum circuit is therefore proportional to $\tau_{\max} \approx 2\pi K/a$. Since the query complexity of Hamiltonian simulation scales linearly with the evolution time, specifically as $\mathcal{O}(\sqrt{\|H\|}\tau_{\max})$, the ratio $K/a$ becomes the dominant factor determining the algorithmic cost. While the cutoff number $K$ also influences the complexity of the coefficient state preparation, its impact on the dominant query complexity is mediated entirely through this ratio.
	
	Our analysis reveals that this crucial ratio $K/a$ is intrinsically linked to the regularity of the spectral function $\hat{f}(\xi)$, particularly its differentiability at the origin. When the target spectral function is an entire function, as is the case for integers $\alpha$, the Paley-Wiener theorems guarantee that its inverse Fourier transform $f(x)$ decays exponentially in the time domain~\cite{paley1934fourier}. This rapid decay allows us to truncate the summation at a relatively small $K$, resulting in a minimal ratio $K/a$ and ensuring optimal query complexity. This represents the ideal regime where the analytical smoothness of the spectral function directly translates to the sparsity of the LCU decomposition.
	
	In this study, we restrict our primary focus to cases where $\alpha$ is an even integer. For instances where $\alpha\geq 0.5$ is non-integers, we employ a $y$-axis symmetry construction to design the spectral function $\hat{f}(\xi)$. We acknowledge that this approach may introduce a cusp or limit the order of differentiability at the origin.
	
	\subsection{Analysis of Complexity for $\alpha \geq 0.5$}\label{sec:ana}
	
	When the decay order is chosen as an positive integer (i.e., $\alpha \in \mathbb{Z}^+$), the spectral function $\hat{f}(\xi) = e^{-T\xi^{2\alpha}}$ becomes an entire function in the complex plane. This analyticity implies that its inverse Fourier transform $f(x)$ possesses a exponential decay in the time domain, allowing for highly efficient truncation. We summarize the convergence properties and the resulting query complexity in the following theorem.
	
	\begin{theorem}[Complexity for $\alpha \in \mathbb{Z}^+$] \label{thm:even_alpha_complexity}
		Let $\alpha \in \mathbb{Z}^+$ be an integer, assume we have access to the block encoding of $2\tilde{H}-I$ or $\sqrt{\tilde{H}}$, $H^\alpha=A$ and $H\succeq 0 $. The non-unitary dynamics of $u_T=e^{-T H^{\alpha}} u_0= e^{-T A} u_0$ can be solved by a quantum algorithm with precision $\epsilon$ to get normalized final state $\ket{u_T}$ with $\Omega(1)$ success probability. The query complexity to the matrix oracle for $2\tilde{H}-I$ is
		\begin{equation}\label{eq:analyticquery_int}
			\begin{split}
				\mathcal{O}\Bigg( u_r \log(1+\alpha) \bigg[ \|A\|^{\frac{1}{2\alpha}} T^{\frac{1}{2\alpha}} \left( \log\frac{u_r}{\epsilon} \right)^{1 - \frac{1}{2\alpha}}  + \log\frac{u_r}{\epsilon} \bigg] \Bigg),
			\end{split}
		\end{equation}
		the query complexity to the oracle for initial state preparation is $\Ocal(u_r\log(1+\alpha))$ and the number of LCU coefficients scales as:
		\begin{equation}
			\mathcal{O}\left( \|A\|^{\frac{1}{2\alpha}} T^{\frac{1}{2\alpha}} \left( \log\frac{u_r}{\epsilon} \right)^{1 - \frac{1}{2\alpha}} + \log\frac{u_r}{\epsilon} \right),
		\end{equation}  here $u_r := \|u_0\|/\|u_T\|$. We show the proof in Appendix~\ref{Proof1}.
	\end{theorem}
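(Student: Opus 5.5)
We apply the operator identity Eq.~\eqref{eq:poisson_operator_2} with the spectral function $\hat{f}(\xi)=e^{-T\xi^{2\alpha}}$. Evaluated at $\xi=\sqrt{H}$ it gives $\hat{f}(\sqrt{H})=e^{-TH^{\alpha}}=e^{-TA}$. Since $\hat f$ is even, real and Schwartz, its inverse Fourier transform $f$ is even and real. Because $\hat f$ is entire of order $2\alpha$, a saddle-point (Paley--Wiener type) estimate gives
\begin{equation*}
|f(x)|\lesssim C\exp\!\left(-c_\alpha\, |x|^{\frac{2\alpha}{2\alpha-1}}\,T^{-\frac{1}{2\alpha-1}}\right).
\end{equation*}
We also need $\|f\|_{L^1}$ in the form $\frac1a\sum_k|f(k/a)|=\Ocal(\log(1+\alpha))$, uniformly in $T$. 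The LCU is $\frac1a\sum_{|k|\le K}f(k/a)\cos(\frac{2\pi k}{a}\sqrt H)$. Each term is implemented by Lemma~\ref{lemma:qsvt_complexity}, and we amplify with $\Ocal(u_r\|c\|_1)$ rounds of amplitude amplification. This accounts for the $u_r\log(1+\alpha)$ prefactor both in the matrix-oracle queries and in the state-preparation queries.

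\textbf{Step 1 (aliasing).} Since $\sigma(\sqrt H)\subset[0,\sqrt{\|H\|}]$, each alias term satisfies
\begin{equation*}
\|\hat f(\sqrt H+na)\|\le e^{-T(|n|a-\sqrt{\|H\|})^{2\alpha}} \quad (n\neq0).
\end{equation*}
For the sum to be at most $\epsilon/u_r$ it suffices to choose
\begin{equation*}
a=\sqrt{\|H\|}+\Big(T^{-1}\log\tfrac{u_r}{\epsilon}\Big)^{\frac1{2\alpha}}.
\end{equation*}
The relative error requirement $\epsilon/u_r$ is needed because the final normalized state divides by $\|u_T\|$.

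\textbf{Step 2 (truncation).} We need the tail $\frac1a\sum_{|k|>K}|f(k/a)|\le\epsilon/u_r$. The decay bound gives
\begin{equation*}
K/a\gtrsim T^{\frac1{2\alpha}}\Big(\log\tfrac{u_r}{\epsilon}\Big)^{1-\frac1{2\alpha}}.
\end{equation*}
The number of terms is then
\begin{equation*}
2K+1=\Ocal\!\left(a\cdot T^{\frac1{2\alpha}}(\log\tfrac{u_r}{\epsilon})^{1-\frac1{2\alpha}}\right).
\end{equation*}
Substituting $a$ from Step 1, the product splits into two parts. The first is $\|H\|^{1/2}T^{1/(2\alpha)}(\log)^{1-1/(2\alpha)}=\|A\|^{1/(2\alpha)}T^{1/(2\alpha)}(\log)^{1-1/(2\alpha)}$, using $\|H\|^{1/2}=\|A\|^{1/(2\alpha)}$. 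The second is $\Ocal(\log\frac{u_r}{\epsilon})$. Together these give the claimed LCU count.

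\textbf{Step 3 (query cost).} The maximal evolution parameter is $\tau_{\max}=2\pi K\sqrt{\|H\|}/a$. By Lemma~\ref{lemma:qsvt_complexity}, the select oracle costs
\begin{equation*}
\Ocal\!\left(\sqrt{\|H\|}\,K/a+\log\tfrac{u_r}{\epsilon}\right)=\Ocal\!\left(\|A\|^{\frac1{2\alpha}}T^{\frac1{2\alpha}}(\log\tfrac{u_r}{\epsilon})^{1-\frac1{2\alpha}}+\log\tfrac{u_r}{\epsilon}\right)
\end{equation*}
queries, implemented with controlled QSVT sharing a single oracle sequence. Per-term QSVT errors add with weights $|c_k|$, so precision $\epsilon/(u_r\|c\|_1)$ per term suffices and only changes the logarithm. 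Multiplying by the amplification factor $u_r\|c\|_1=\Ocal(u_r\log(1+\alpha))$ yields Eq.~\eqref{eq:analyticquery_int}.

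\textbf{Main obstacle.} The hardest step is the quantitative, $\alpha$-uniform control of $f$. Rescaling gives $f(x)=T^{-1/(2\alpha)}F_\alpha(xT^{-1/(2\alpha)})$ with $F_\alpha$ the inverse Fourier transform of $e^{-\xi^{2\alpha}}$, so we need explicit constants in the super-exponential decay of $F_\alpha$ and the bound $\|F_\alpha\|_{L^1}=\Ocal(\log(1+\alpha))$. The discrete $\ell^1$ sum $\frac1a\sum_k|f(k/a)|$ must also be compared with the integral, which holds once $a\gtrsim T^{-1/(2\alpha)}$, and Step 1 guarantees this. For the decay, I would shift the Fourier integration contour to $\Im\xi=\pm y$ and optimize over $y$. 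For the $L^1$ bound, I would split $F_\alpha$ at $|x|\sim 1$ and at the oscillatory scale, bounding the middle region by integration by parts, where the $\log(1+\alpha)$ factor should emerge.
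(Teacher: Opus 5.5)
Your proposal is correct and follows essentially the same route as the paper's proof in Appendix~\ref{Proof1}. Aliasing fixes $a\sim\sqrt{\|H\|}+(T^{-1}\log\frac{u_r}{\epsilon})^{1/(2\alpha)}$, super-exponential decay of $f$ fixes $K/a$, the term count is $K=a\cdot(K/a)$, Lemma~\ref{lemma:qsvt_complexity} prices the largest phase, and amplification costs $u_r\|c\|_1=\Ocal(u_r\log(1+\alpha))$ via Theorem~\ref{thm:L1}, all at internal precision $\epsilon/u_r$. Your contour shift, integration-by-parts $L^1$ bound, Riemann-sum comparison and shared-oracle SEL would make rigorous the steps the paper argues only heuristically (a real-axis saddle point and a Sinc approximation); be aware, though, that a correct decay bound has $c_\alpha=\Ocal(1/\alpha)$, so $K/a$ carries an extra factor of order $\alpha$ that the stated bound absorbs only by treating $\alpha$ as a fixed constant.
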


	In this section, we have established a comprehensive analysis for the regime where the decay order $\alpha$ is a positive integer. By leveraging the PSF, our approach exploits the inherent analyticity and Schwartz-class properties of the spectral function $\hat{f}(\xi)$. This formulation elegantly circumvents the complexities typically associated with the explicit discretization of continuous integrals. Notably, our framework provides a unified treatment that encompasses the specific results derived under the previous work of the case $\alpha=1$~\cite{jin2026transmutationbasedquantumsimulation, kharazi2026sublineartimequantumalgorithmhighdimensional,low2017hamiltoniansimulationuniformspectral}, consistently achieving exponential convergence across the entire integer spectrum.

	In fact, we can directly establish a corollary for the specific case where $\alpha$ is an even integer. In this scenario, we can drop the input assumption regarding $2\tilde{H}-I$ and rely solely on the standard input assumption of $\tilde{H}$, yielding the following Corollary~\ref{cor:even_alpha}:
	
	\begin{corollary}\label{cor:even_alpha}
		For any even positive integer $\alpha \in 2\mathbb{Z}^+$, assuming access solely to the standard block-encoding of $\tilde{H}$ without assumption of $H\succeq 0$, the quantum query complexity to the oracle of $\tilde{H}$ established in Theorem~\ref{thm:even_alpha_complexity} becomes:
		\begin{equation}\label{eq:analyticquery_even}
			\begin{split}
				\mathcal{O}\Bigg( u_r \log(1+\alpha) \bigg[ \|A\|^{\frac{1}{\alpha}} T^{\frac{1}{\alpha}} \left( \log\frac{u_r}{\epsilon} \right)^{1 - \frac{1}{\alpha}}  + \log\frac{u_r}{\epsilon} \bigg] \Bigg).
			\end{split}
		\end{equation}
		Furthermore, the number of coefficients in the LCU scales as:
		\begin{equation}
			\mathcal{O}\left( \|A\|^{\frac{1}{\alpha}} T^{\frac{1}{\alpha}} \left( \log\frac{u_r}{\epsilon} \right)^{1 - \frac{1}{\alpha}} + \log\frac{u_r}{\epsilon} \right),
		\end{equation}
		while the query complexity for the initial state preparation remains identical to that in Theorem~\ref{thm:even_alpha_complexity}.
	\end{corollary}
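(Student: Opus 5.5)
The plan is to reuse the proof of Theorem~\ref{thm:even_alpha_complexity} with one change: apply the Fourier-PSF identity to $H$ itself instead of $\sqrt{H}$. Write $\alpha=2\beta$ with $\beta\in\mathbb{Z}^+$, so $e^{-TH^\alpha}=e^{-T(H^2)^{\beta}}$. In the even form of the identity, Eq.~\eqref{eq:poisson_operator_2}, the basis operators are $\cos\big(\tfrac{2\pi k}{a}\sqrt{X}\big)$, where $X=H^2$ plays the role of the generator. Then $\sqrt{X}=|H|$, and since cosine is even, $\cos\big(\tfrac{2\pi k}{a}|H|\big)=\cos\big(\tfrac{2\pi k}{a}H\big)$ for any Hermitian $H$. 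So $H\succeq 0$ is never used, and the target spectral function $\hat f(\xi)=e^{-T\xi^{2\beta}}=e^{-T\xi^{\alpha}}$ is evaluated at $|H|$, giving exactly $e^{-TH^\alpha}$.

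Concretely, the steps are as follows.
\begin{enumerate}
\item Take the $\beta$-instance of Theorem~\ref{thm:even_alpha_complexity}, applied to the positive semidefinite root $X=H^2$ with $X^\beta=A$. Its bounds on the aliasing and truncation errors depend only on the scalar function $\hat f(\xi)=e^{-T\xi^{2\beta}}$ and on $\|X\|^{1/2}=\|H\|=\|A\|^{1/\alpha}$. So the choices of $a$, $K$ and the coefficients $c_k=f(k/a)/a$ carry over unchanged.
\item Substitute $\|X\|^{\frac{1}{2\beta}}=\|A\|^{\frac{1}{\alpha}}$ and $T^{\frac{1}{2\beta}}=T^{\frac{1}{\alpha}}$ into the bound from Theorem~\ref{thm:even_alpha_complexity}. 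This turns the exponents $\tfrac{1}{2\alpha}$ into $\tfrac{1}{\alpha}$, giving the stated number of LCU terms and the maximal phase $\tfrac{K}{a}\|H\|$.
\item Bound the cost of each LCU term. Instead of Lemma~\ref{lemma:qsvt_complexity}, implement $\cos(\tau \tilde H)$ with $\tau=\tfrac{2\pi k}{a}\|H\|$ directly from the block-encoding of $\tilde H$, whose spectrum lies in $[-1,1]$. This is the standard even Jacobi--Anger polynomial: it is bounded by $1$ on $[-1,1]$ and has degree $\mathcal{O}(\tau+\log(1/\epsilon))$. The QSVT obstruction discussed before Lemma~\ref{lemma:qsvt_complexity} (the $\cosh$ blow-up) comes from $\sqrt{x}$ and disappears here, so no domain shift and no $H\succeq0$ assumption are needed.
\item Combine with the LCU normalization and amplitude amplification exactly as in Theorem~\ref{thm:even_alpha_complexity}. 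This yields the extra factor $u_r\log(1+\alpha)$, where the $\log(1+\alpha)$ factor arises from $\|c\|_1$ for the spectral function $e^{-T\xi^\alpha}$. It also gives the same state-preparation count.
\end{enumerate}

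The main obstacle is step 1. I must check that the error analysis of Theorem~\ref{thm:even_alpha_complexity} truly depends only on the spectrum of $\sqrt{X}\subset[0,\|H\|]$ and on the decay of $f$. In particular, the aliasing terms $\hat f(|H|+naI)$ must be controlled using only $|H|\succeq0$, which holds automatically. I also need to confirm that the constant $\log(1+\alpha)$, rather than $\log(1+\beta)$, still absorbs the $\|c\|_1$ bound; since $\log(1+2\beta)=\Theta(\log(1+\beta))$ for $\beta\ge1$, this is fine. A secondary point is that the error term in step 3 must be rescaled by $\|c\|_1$ when choosing each QSVT precision, which only changes logarithmic factors already present inside $\log(u_r/\epsilon)$.
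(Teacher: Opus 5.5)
Your proposal is correct and follows the paper's route. The paper likewise applies the Fourier-PSF identity with $H$ itself as the generator and spectral function $e^{-T\xi^{\alpha}}$, which is even and entire for even $\alpha$, so every exponent $\tfrac{1}{2\alpha}$ in Theorem~\ref{thm:even_alpha_complexity} becomes $\tfrac{1}{\alpha}$; it also remarks that implementing the basis operators $e^{-i\frac{2\pi k}{a}H}$ (equivalently your $\cos(\tfrac{2\pi k}{a}H)$) by QSVT needs only Hermiticity of $H$. Your repackaging as the $\beta=\alpha/2$ instance of that theorem for $X=H^2$, using $\sqrt{X}=|H|$ and $\cos(\tau|H|)=\cos(\tau H)$, is a clean way to make that reuse explicit.
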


	We now address the general scenario where the decay order $\alpha$ is a non-integer positive number. This regime encompasses both the heavy-tailed stable distributions (where $0.5\leq\alpha < 2$) and the lighter-tailed but non-analytic dynamics (where $\alpha > 2$). Under the rigorous constraint of preserving the definition of the spectral function $\hat{f}(\xi) = e^{-t|\xi|^{\alpha}}$ for the positive half-axis, the function inevitably encounters a fundamental regularity barrier at the origin. Unlike the integer cases where the function is either analytic or can be regularized to high-order differentiability, a non-integer exponent $\alpha$ introduces a branch point singularity at $\xi=0$. Consequently, the smoothness at the origin is strictly limited by the fractional power, dictating a heavy algebraic tail in the time domain. We provide the corresponding theorem for the case of non-integer functions.
	
	\begin{theorem}[Complexity for $\alpha \notin \mathbb{Z}$ and $\alpha \geq 0.5$] \label{thm:fractional_complexity}
		Let $\alpha \in \mathbb{R}^+ \setminus \mathbb{Z}$ and  be a non-integer. The non-unitary dynamics can be solved by a quantum algorithm with precision $\epsilon$ with $\Omega(1)$ success probability from Eq.~\eqref{eq:origin_problem}, assume we have access to the block encoding of $2\tilde{H}-I$ or $\sqrt{\tilde{H}}$ and $H\succeq 0$. Due to the fractional singularity, the query complexity to the matrix oracle for $H$ scales polynomially with the inverse precision:
		\begin{equation} 
			\mathcal{O}\left( u_r \log(\alpha+e) \alpha \cdot  \|A\|^{\frac{1}{2\alpha}} T^{\frac{1}{2\alpha}} \left( \frac{u_r}{\epsilon} \right)^{\frac{1}{2\alpha}} \right),
		\end{equation}
		where the dominant scaling is $\mathcal{O}(\epsilon^{-\frac{1}{2\alpha}})$. The number of LCU coefficients scales as:
		\begin{equation} 
			\mathcal{O}\left(\alpha \left(\frac{ u_r}{\epsilon}\right)^{\frac{1}{2\alpha}} \left(T^{\frac{1}{2\alpha}}\|A\|^{\frac{1}{2\alpha}}+(\log\frac{u_r}{\epsilon})^{\frac{1}{2\alpha}}\right)\right).
		\end{equation}
		And query complexity to $|u_0\rangle$ is about $\Ocal(u_r\log(e+\alpha))$, here $ u_r := \|u_0\|/\|u_T\|$. We show the proof in Appendix~\ref{Proof3}.
	\end{theorem}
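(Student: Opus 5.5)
We apply the Fourier-PSF identity Eq.~\eqref{eq:poisson_operator_2} with the even spectral function $\hat{f}(\xi)=e^{-T|\xi|^{2\alpha}}$. Evaluated at $\xi=\sqrt{H}$ it gives $\hat f(\sqrt H)=e^{-TH^{\alpha}}=e^{-TA}$. Its inverse Fourier transform $f(x)=\int_{\mathbb{R}} e^{-T|\xi|^{2\alpha}}e^{2\pi i \xi x}d\xi$ is real and even, so the LCU takes the cosine form. Each cosine term is implemented with the cost given by Lemma~\ref{lemma:qsvt_complexity}.

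The error splits as in Eq.~\eqref{eq:poisson_operator_2}; we bound the aliasing and truncation parts separately and make each at most $\epsilon/(2u_r)$ in operator norm. This relative precision is what is needed after normalizing by $\|u_T\|$.

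\textbf{Aliasing.} For a spectrum $\sigma(\sqrt H)\subset[0,\sqrt{\|H\|}]$ and $a\ge 2\sqrt{\|H\|}$, each shifted copy satisfies $|\xi+na|\ge |n|a/2$. Hence
\[
\sum_{n\neq0}\hat f(\sqrt H+na)\le 2\sum_{n\ge1}e^{-T(na/2)^{2\alpha}}.
\]
This is $\le \epsilon/(2u_r)$ once $a\gtrsim \max\{\sqrt{\|H\|},\,(T^{-1}\log(u_r/\epsilon))^{1/(2\alpha)}\}$. The second branch, at small $T$, is what produces the additive $(\log\frac{u_r}{\epsilon})^{1/(2\alpha)}$ in the coefficient count.

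\textbf{Truncation (the main obstacle).} Since $2\alpha\notin 2\mathbb{Z}$, $\hat f$ is only $C^{\lfloor 2\alpha\rfloor}$ at the origin, and $f$ has an algebraic tail. We will show
\[
|f(x)|\le C_\alpha\,T\,|x|^{-1-2\alpha},\qquad |x|\gtrsim T^{1/(2\alpha)}.
\]
The classical stable-law asymptotics give $f(x)\sim \frac{\Gamma(1+2\alpha)\sin(\pi\alpha)}{\pi}\,T\,(2\pi)^{-2\alpha}|x|^{-1-2\alpha}$. To make the bound uniform, first rescale $\xi\to T^{-1/(2\alpha)}\xi$, which reduces to $T=1$ and gives $f_T(x)=T^{-1/(2\alpha)}f_1(xT^{-1/(2\alpha)})$. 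Then rotate the integration contour, $\xi=e^{\pm i\pi/(4\alpha)}s$ for $\xi$ on each half-axis, or equivalently integrate by parts against the branch point; this isolates the $|\xi|^{2\alpha}$ singularity. The constants $C_\alpha$ (growing at most like $\alpha\log(\alpha+e)$ via Stirling bounds on $\Gamma(1+2\alpha)$) have to be tracked carefully here.

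With this bound, the tail is
\[
\frac1a\sum_{|k|>K}|f(k/a)|\lesssim C_\alpha\,T\,(K/a)^{-2\alpha},
\]
and $\le\epsilon/(2u_r)$ forces
\[
\frac{K}{a}\sim T^{1/(2\alpha)}\Big(\frac{C_\alpha u_r}{\epsilon}\Big)^{1/(2\alpha)}.
\]

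\textbf{Assembling the cost.} The largest evolution time is $2\pi K/a$. By Lemma~\ref{lemma:qsvt_complexity}, each term costs $\mathcal{O}(\frac{K}{a}\sqrt{\|H\|}+\log\frac{1}{\epsilon})$. Using $\sqrt{\|H\|}=\|A\|^{1/(2\alpha)}$, this gives per-application cost
\[
\mathcal{O}\big(\|A\|^{1/(2\alpha)}T^{1/(2\alpha)}(u_r/\epsilon)^{1/(2\alpha)}\big),
\]
up to the $\alpha$-dependent constant. The number of LCU terms is $2K+1=2a\cdot(K/a)$. Inserting the choice of $a$ gives the stated count
\[
\alpha\,(u_r/\epsilon)^{1/(2\alpha)}\big(T^{1/(2\alpha)}\|A\|^{1/(2\alpha)}+(\log\tfrac{u_r}{\epsilon})^{1/(2\alpha)}\big).
\]
Because $\hat f\ge0$ and $f$ is a probability-like density, the coefficient norm satisfies $\|c\|_1=\frac1a\sum_k|f(k/a)|=\mathcal{O}(\log(\alpha+e))$. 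This follows by comparing the Riemann sum with $\|f\|_{L^1}$: the stable density is nonnegative for $2\alpha\le2$, and for $2\alpha>2$ the negative lobes are controlled by the same contour estimate.

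Amplitude amplification then multiplies both the matrix-oracle count and the state-preparation count by $\mathcal{O}(u_r\|c\|_1)$. This yields the stated overall query complexity and the $\mathcal{O}(u_r\log(e+\alpha))$ queries to $|u_0\rangle$.
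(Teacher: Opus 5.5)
Your architecture is the paper's: the PSF identity with $\hat f(\xi)=e^{-T|\xi|^{2\alpha}}$, the algebraic tail $|x|^{-1-2\alpha}$ with the same leading constant (yours correctly keeps the factor $(2\pi)^{-2\alpha}$, which the paper omits), the nearest-copy aliasing condition on $a$, Lemma~\ref{lemma:qsvt_complexity}, $\epsilon'=\epsilon/u_r$, and amplification by $u_r\|c\|_1$. For fixed $\alpha$ this reproduces the paper's proof. Your rotated-contour plan would even make the tail bound rigorous where the paper only writes ``$\sim$''. The gap is in the explicit $\alpha$-dependence, which is part of the statement.

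First, $C_\alpha$ does not grow like $\alpha\log(\alpha+e)$. $\Gamma(1+2\alpha)$ is super-exponential, and what enters $K/a$ is $C_\alpha^{1/(2\alpha)}$. With the leading asymptotic constant, Stirling gives $C_\alpha^{1/(2\alpha)}=\Theta(\alpha)$. That is exactly where the factor $\alpha$ in $K/a$, in the query count and in the LCU count comes from (the paper's bound on $K/a$ in Appendix~\ref{Proof3}). With the growth you state, no factor $\alpha$ would appear, so your final count does not follow from your own bound.

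Moreover, consider the uniform bound your rotation actually yields. Rotate by $\theta=\pi/(4\alpha)$, drop the constant term (it cancels between the two half-lines), and use $|e^{-is^{2\alpha}}-1|\le s^{2\alpha}$. This gives $|f_1(y)|\le 2\Gamma(2\alpha+1)\,(2\pi|y|\sin\theta)^{-1-2\alpha}$. Since $\sin\theta\approx\pi/(4\alpha)$, its constant has $C_\alpha^{1/(2\alpha)}=\Theta(\alpha^2)$. Carried through, it gives $\alpha^2$ in place of $\alpha$. The paper gets $\alpha$ only by using the asymptotic constant as if it were a bound, which is valid only once $K/a$ exceeds an $\alpha$-dependent threshold.

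Second, your claim that for $\alpha>1$ ``the negative lobes are controlled by the same contour estimate'' fails. Integrating $C_\alpha T|x|^{-1-2\alpha}$ from $|x|\sim T^{1/(2\alpha)}$ gives order $C_\alpha/\alpha$, which is super-exponential in $\alpha$, not $\mathcal{O}(\log(\alpha+e))$. The logarithm comes from the intermediate, sinc-like range, which the paper treats heuristically in Theorem~\ref{thm:L1} of Appendix~\ref{App:EstimL1}.

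You can close this step rigorously. One integration by parts against the decreasing $\hat f$ gives $f(x)=\frac{1}{\pi x}\int_0^\infty(-\hat f'(\xi))\sin(2\pi x\xi)\,d\xi$. Since $-\hat f'\ge 0$ integrates to $\hat f(0)=1$, this yields $|f(x)|\le 1/(\pi|x|)$ for all $x$. Then split the line into three ranges:
\begin{itemize}
\item for $|x|\le T^{1/(2\alpha)}$, use $\|f\|_\infty\le\|\hat f\|_{L^1}$;
\item for $T^{1/(2\alpha)}\le|x|\le C_\alpha^{1/(2\alpha)}T^{1/(2\alpha)}$, use $1/(\pi|x|)$;
\item beyond that, use the contour tail.
\end{itemize}
The total is $\mathcal{O}(1+\log C_\alpha^{1/(2\alpha)})=\mathcal{O}(\log(\alpha+e))$. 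The Riemann sum $\frac{1}{a}\sum_k|f(k/a)|$ is then controlled by this even, monotone envelope plus $\|\hat f\|_{L^1}/a=\mathcal{O}(1)$. For $\alpha\in[1/2,1)$ your positivity argument is fine, and the PSF at $\delta=0$ even evaluates the sum directly as $\sum_n\hat f(na)=1+\mathcal{O}(\epsilon')$.
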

	
	\begin{figure}[htbp]
		\centering
		\includegraphics[width=0.45\textwidth]{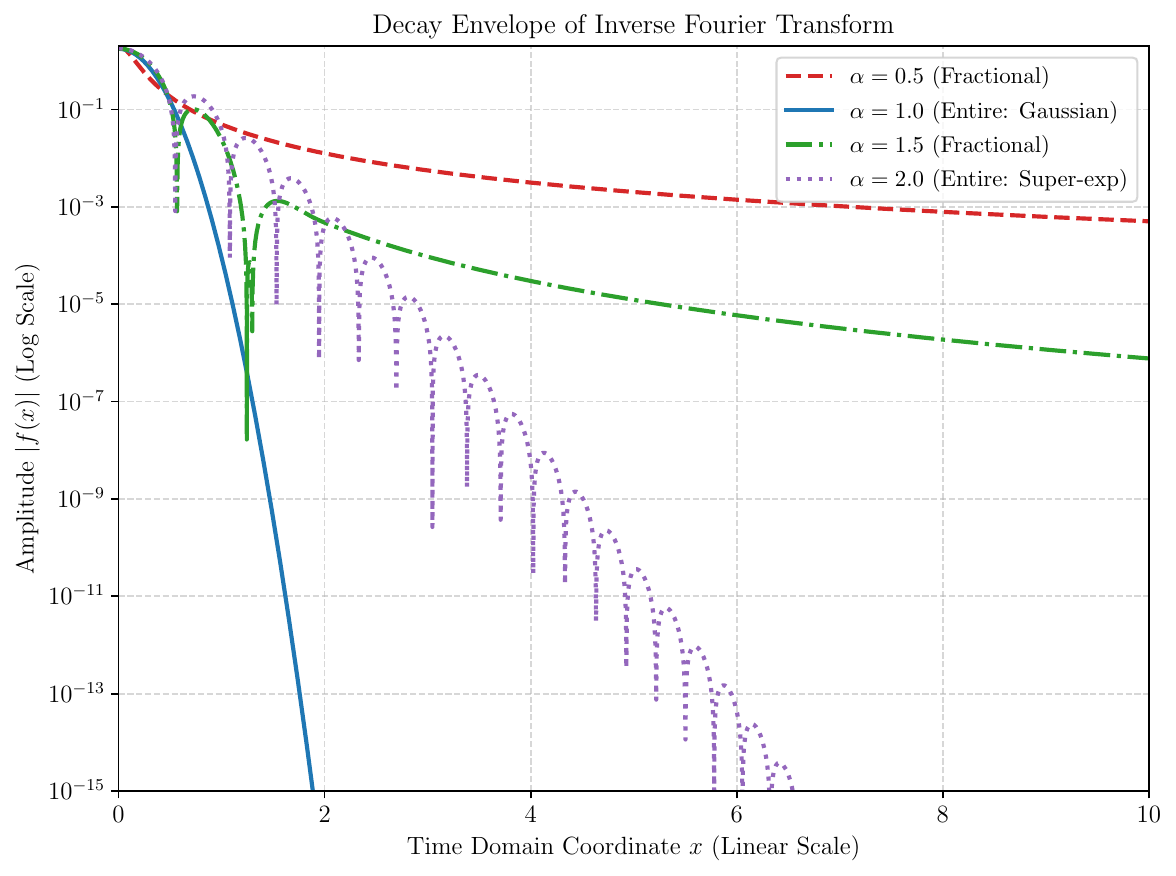}
		\caption{Decay envelopes of the time-domain integral $f(x)$ for different dissipation orders $\alpha$ on a linear x-axis. For integer values ($\alpha=1.0$ and $2.0$), the function exhibits rapid exponential or super-exponential decay, whereas fractional orders ($\alpha=0.5$ and $1.5$) result in algebraic heavy tails $|f(x)| \sim |x|^{-(2\alpha+1)}$, validating the complexity transitions established in Theorem~\ref{thm:even_alpha_complexity} and Theorem~\ref{thm:fractional_complexity}.}
		\label{fig:fourier_decay}
	\end{figure}
	
	The simulation of fractional decay orders $\alpha \notin \mathbb{Z}$ reveals an intrinsic computational barrier imposed by the non-analytic nature of the operator. Crucially, if we restrict our access solely to the standard block-encoding oracle of $\tilde{H}$ (foregoing the domain-shifted $2\tilde{H}-I$ oracle), this theoretical framework expands: the computational barrier now encompasses odd integer values of $\alpha$ in addition to fractional orders. Under this restricted input assumption, all scaling exponents of $2\alpha$ established in Theorem~\ref{thm:fractional_complexity} are intrinsically modified to $\alpha$. Specifically, the relevant power in the spectral function becomes $|\xi|^{\alpha}$, which introduces a branch point singularity (or lack of analyticity) at the origin that dictates a heavy algebraic tail $|x|^{-(\alpha+1)}$ in the time domain as shown in Fig.~\ref{fig:fourier_decay}. 
	
	This slow decay is fundamental. It cannot be circumvented by local smoothing techniques without altering the nonlocal physics of the fractional dynamics. Consequently, the algorithmic complexity transitions from the efficient quasi-polylogarithmic scaling observed in analytic (even integer) regimes to a polynomial scaling $\mathcal{O}(\epsilon^{-\frac{1}{\alpha}})$, highlighting a distinct trade-off between the physical anomalousness of the diffusion and the computational resources required to simulate it.
	
	\section{Path B: Contour-PSF for Holomorphic Matrix Functions}
	\label{sec:path_b_contour}
	
	In this section, we rigorously derive the discretization of the contour integral using the framework of the finite PSF. Unlike standard quadrature error analysis which often yields loose polynomial bounds, our group-theoretic approach reveals the exponential convergence of the approximation. Furthermore, we analyze the complexity of the LCU implementation and propose an optimization strategy for the contour radius.
	
	\subsection{Algorithm Framework}
	
	Leveraging Poisson summation on finite groups, we derive a novel identity for general quantum eigenvalue transformations, presented in Eq.~\eqref{eq:finite_poisson_identity}. This formulation circumvents the conventional contour discretization analysis required in previous approaches, enabling a direct evaluation of the discrete sampling term and the associated error terms.
	
	Notably, the discrete sampling term aligns perfectly with those found in recent literature~\cite{jiang2026contourintegralbasedquantumeigenvalue}, implying that the underlying circuit implementation remains unchanged, our primary contribution lies in providing a more transparent and rigorous analytical framework. We corroborate the established conclusion that the number of sampling points of $m$, which not impact the asymptotic query complexity of the input matrix. However, $m$ significantly influences the overall circuit complexity, specifically regarding the state preparation for the LCU and the cumulative overhead incurred by amplitude amplification.
	
	Eq.~\eqref{eq:finite_poisson_identity} explicitly characterizes the approximation error through two components: the geometric aliasing term, $-f(A)g(A)$, and the analytic truncation integral over $C_{R_2}$. Both terms vanish asymptotically as $m \to \infty$. Furthermore, we demonstrate that to achieve a target precision $\epsilon'$, the required number of sampling points scales as $m \sim \tOcal(\log(\frac{1}{\epsilon'}))$. This represents an exponential improvement in the scaling estimate of $m$ compared to prior studies.
	
	\subsection{Complexity Analysis and Improvement}\label{Sec:facomplex}
	Regarding Eq.~\eqref{eq:finite_poisson_identity}, we approximate the target operator $f(A)$ by implementing the discrete sum $\frac{1}{m}\sum_{k=1}^{m} w_{k}f(w_{k})(w_{k}I-A)^{-1}$. This is achieved by employing the QSVT to approximate the matrix inversion process. Building upon established QSVT theorems and prior research, we utilize the block-encoding matrix $\frac{w_{k}I-A^{\dagger}}{\alpha_A+R_{1}}$ to perform the QSVT, thereby approximating the scaled inverse term corresponding to $(w_{k}I-A)^{-1}$. To achieve an approximation error of $\epsilon'$, the required query complexity is given by $\mathcal{O}\left(\gamma (\alpha_A+R_{1})\log\frac{1}{\epsilon'}\right)$, where $\gamma$ denotes the upper bound on the spectral norm of the resolvent $(zI-A)^{-1}$ on $C_{R_1}$, and $\alpha_A \geq \|A\|$.
	
	Furthermore, considering the operation is applied to a quantum state $|\psi\rangle$, amplitude amplification is required to ensure a final success probability of $\Omega(1)$. The corresponding amplification factor is given by:
	\begin{equation}
		\mathcal{O}\left(\frac{4\gamma}{3}\frac{\frac{1}{m}\sum_{k=1}^{m} |w_{k}f(w_{k})|}{\|f(A)|\psi\rangle\|}\right)
		= \mathcal{O}\left(\gamma\frac{R_{1}B_{1}}{\|f(A)|\psi\rangle\|}\right),
	\end{equation}
	where $B_1$ represents the bound of the function $|f(z)|$ on the $C_{R_1}$ integration contour. 
	
	Theoretically, we can bounded $R_1$ by $\alpha_A$. Under this premise, the query complexity for matrix $A$ is initially given by:
	\begin{equation}
		\mathcal{O}\left(\frac{\gamma^2\alpha_A^2B_{1}}{\|f(A)|\psi\rangle\|}\log\frac{1}{\epsilon'}\right).
	\end{equation}
	Considering that amplitude amplification scales the error, and aiming to bound the final error contribution from this part by $\frac{\epsilon}{2}$, we impose the condition:
	\begin{equation}
		\epsilon'\gamma\frac{\alpha_A B_{1}}{\|f(A)|\psi\rangle\|}=\frac{\epsilon}{2}.
	\end{equation}
	The remaining error budget of $\frac{\epsilon}{2}$ is allocated to control the discretization size $m$. Consequently, we obtain the final query complexity for matrix $A$ as 
	\begin{equation}
		\mathcal{O}\left(\frac{\gamma^2\alpha_A^2B_{1}}{\|f(A)|\psi\rangle\|}\log\frac{\gamma \alpha_A B_{1}}{\|f(A)|\psi\rangle\|\epsilon}\right), 
	\end{equation}
	and the complexity for preparing the quantum state $|\psi\rangle$ as
	\begin{equation}
		\mathcal{O}\left(\frac{\gamma \alpha_A B_{1}}{\|f(A)|\psi\rangle\|}\right).
	\end{equation}
	
	The selection of the sampling parameter $m$ constitutes the primary enhancement in the transformation of our formula Eq.~(\ref{eq:finite_poisson_identity}). Notably, the determination of $m$ is independent of the algorithm's query complexity, allowing for a virtually decoupled analysis. Given the selection of the normalization coefficient $\|f(A)|\psi\rangle\|$ and the allocated error budget of $\frac{\epsilon}{2}$, we ultimately require the following condition to hold:
	\begin{equation}
		\begin{split}
			&\left\| \frac{1}{2\pi i} \oint_{C_{R_2}}\frac{R_{1}^{m}}{z^{m}-R_{1}^{m}}\frac{f(z)}{zI - A}  \, dz|\psi\rangle - f(A)g(A)|\psi\rangle \right\|\\& \leq \frac{\|f(A)|\psi\rangle\| \epsilon}{2}.
		\end{split}
	\end{equation}
	This requirement can be analyzed through the following two aspects. First we need:
	\begin{equation}
		\norm{	f(A)g(A)|\psi\rangle}=\norm{ \frac{A^{m}}{A^{m}-R_{1}^{m}}f(A)|\psi\rangle}\leq  \frac{\|f(A)|\psi\rangle\| \epsilon}{4},
	\end{equation}
	then we get $m\in\Ocal\left(\log\frac{1}{\epsilon} \right)$. 
	
	Next, we estimate the upper bound of the integral. We introduce the following parameters defined on the contour $C_{R_2}$: let $B_2 = \max_{z \in C_{R_2}} |f(z)|$ denote the maximum magnitude of the function, and let $\gamma_2 = \max_{z \in C_{R_2}} \|(zI - A)^{-1}\|$ denote the upper bound of the resolvent norm, and define the ratio $\mu = R_1/R_2 < 1$.
	
	Given that the spectrum of $A$ is strictly contained within the radius $R_1$, the distance between any point $z$ on the contour $C_{R_2}$ and the eigenvalues is lower bounded by $R_2 - R_1$. Consequently, for a diagonalizable matrix $A$, the resolvent norm $\gamma_2$ can be estimated involving the condition number $\kappa_S$ of the eigenvector matrix:
	\begin{equation}
		\gamma_2 = \max_{z \in C_{R_2}} \|(zI - A)^{-1}\| \leq \frac{\kappa_S}{R_2 - R_1}.
	\end{equation}
	
	By applying the standard estimation lemma (ML inequality) and substituting the bound for $\gamma_2$, the norm of the integral term can be bounded as follows:
	\begin{equation}
		\begin{split}
			&\left\| \frac{1}{2\pi i} \oint_{C_{R_2}}\frac{R_{1}^{m}}{z^{m}-R_{1}^{m}} \frac{f(z)}{zI - A} \, dz|\psi\rangle \right\|\\
			\leq &\frac{1}{2\pi} (2\pi R_2) \frac{\mu^m}{1-\mu^m} B_2 \gamma_2 \|\psi\|\\ 
			\leq &\frac{R_2 B_2 \kappa_S \|\psi\| \mu^m}{(1-\mu^m)(R_2 - R_1)}.
		\end{split}
	\end{equation}
	
	To satisfy the target error bound of $\frac{\|f(A)|\psi\rangle\| \epsilon}{4}$, the sampling parameter $m$ must satisfy:
	\begin{equation}
		\frac{\mu^m}{1-\mu^m} \leq \frac{\|f(A)|\psi\rangle\| \epsilon (R_2 - R_1)}{4 R_2 B_2 \kappa_S \|\psi\|}.
	\end{equation}
	Solving for $m$, and combine the term of $\norm{	f(A)g(A)|\psi\rangle}$ we obtain the asymptotic scaling:
	\begin{equation}\label{eq:mbound}
		m = \mathcal{O}\left( \frac{\log(\frac{\kappa_S B_2}{\epsilon})}{\log(R_2/R_1)} \right).
	\end{equation}
	This indicates that $m$ scales logarithmically with the inverse precision $1/\epsilon$ and the condition number, and inversely with the logarithm of the gap ratio $R_2/R_1$.
	\section{Applications}

	\subsection{High-Order Dissipation: Biharmonic Diffusion}
	
	We consider the homogeneous biharmonic diffusion equation, a fundamental model in continuum mechanics for thin film evolution and the Cahn-Hilliard dynamics of phase separation~\cite{10.10631.1722742}:
	\begin{equation}
		\partial_t u(x,t) = -\Delta^2 u(x,t), \quad u(x,0) = u_0(x).
	\end{equation}
	In our unified framework, this high-order dissipation corresponds to a decay order of $\alpha=4$.
	
	Conventional transmutation-based methods, such as those proposed by Jin et al.~\cite{jin2026transmutationbasedquantumsimulation}, typically rely on explicitly decomposing the generator into a symmetric product form to simulate high-order operators. To implement our underlying spatial discretization, we naturally adopt the elegant finite-difference matrix construction introduced in their work. Specifically, we utilize their auxiliary first-order difference operator $L'$ defined as~\cite{jin2026transmutationbasedquantumsimulation}:
	\begin{equation}
		L^{\prime} :=\frac{1}{h} \begin{bmatrix}
			-1 & & &   \\
			1 & -1 & &  \\
			& \ddots & \ddots & \\
			& & 1 & -1  \\
			&  & & 1  
		\end{bmatrix}.
	\end{equation}
	While conventional approaches might require assembling increasingly complex higher-order discrete operators to simulate phenomena like biharmonic diffusion, our Fourier-PSF framework circumvents this overhead. We achieve a structurally simpler implementation and superior algorithmic complexity by lifting the high-order dynamics entirely into the classical spectral weight function.
	
	For a $d$-dimensional system, the full discrete gradient operator $L$ is constructed by stacking the directional derivatives:
	\begin{equation}
		L := \left[ (L^{(1)})^\dagger, (L^{(2)})^\dagger, \cdots, (L^{(d)})^\dagger \right]^\dagger,
	\end{equation}
	where $L^{(k)}$ acts as $L'$ along the $k$-th dimension and as the identity elsewhere~\cite{jin2026transmutationbasedquantumsimulation}. We then define the Hermitian root operator (a Dirac-like operator) $H$ as:
	\begin{equation}
		H := \begin{bmatrix}
			O & -iL^{\dagger}   \\
			iL & O  
		\end{bmatrix}.
	\end{equation}
	Instead of constructing a new, dense high-order discrete operator for $\Delta^2$, we directly utilize $H$. It is straightforward to verify that the primary block of $H^4$ naturally recovers the discrete biharmonic operator $(L^\dagger L)^2 \approx \Delta^2$, incorporating all mixed derivative terms required for isotropic diffusion. Thus, the target non-unitary evolution is directly given by $e^{-tA} \approx e^{-tH^4}$.
	
	Since $\alpha=2$ is an integer, the corresponding spectral function $\hat{f}(\xi) = e^{-t \xi^{2\alpha}}$ is an entire function. As established in Sec.~\ref{sec:fpsf} and Sec.~\ref{sec:ana}, the algorithm resides in the analytic regime governed by Corollary~\ref{cor:even_alpha}, achieving exponential convergence with respect to the truncation radius. Consider $\Omega=(0,1)^{d}$ and the staggered finite-difference discretization described above. With the spectral norm scaling as $\|H\| = \Theta(\sqrt{d}/h)$, the quantum query complexity to the block-encoding oracle to prepare an $\epsilon$-approximation of the normalized state scales as:
	\begin{equation}
		\mathcal{O}\left( u_r \left( \frac{\sqrt{d}}{h} T^{\frac{1}{4}} \left(\log \frac{u_r}{\epsilon}\right)^{\frac{3}{4}} + \log \frac{u_r}{\epsilon} \right) \right)
	\end{equation}
	along with $\mathcal{O}(u_r)$ queries to the initial state preparation oracle. By lifting the dynamics into the spectral weights, our method simulates fourth-order dissipation while keeping the hardware implementation complexity equivalent to that of a simple first-order wave equation solver.
	
	Furthermore, it is worth noting that this identical framework naturally recovers the simulation of standard second-order dissipation (i.e., the homogeneous heat equation, $\partial_t u = \Delta u$). Setting $\alpha=1$ yields the target evolution $e^{-tH^2}$ and the spectral function $\hat{f}(\xi) = e^{-t \xi^2}$. Following the same analytical procedure, the oracle query complexity for the heat equation strictly bounds to $\mathcal{O}\left( u_r \left( \frac{\sqrt{d}}{h} T^{\frac{1}{2}} \left(\log \frac{u_r}{\epsilon}\right)^{\frac{1}{2}} + \log \frac{u_r}{\epsilon} \right) \right)$, which matches the optimal asymptotic scaling established in previous state-of-the-art works~\cite{jin2026transmutationbasedquantumsimulation}.

	\subsection{Super-diffusive Transport: L\'evy Flights}
	
	We focus on a specific instance of anomalous diffusion corresponding to the Holtsmark distribution class~\cite{METZLER20001}. The governing equation models super-diffusive transport phenomena, such as turbulent flows and financial market fluctuations~\cite{LISCHKE2020109009}. The dynamics are described by the fractional diffusion equation:
	\begin{equation}
		\partial_t u(x,t) = -(-\Delta)^{0.75} u(x,t), \quad u(x,0) = u_0(x).
	\end{equation}
	In our unified framework, the target non-unitary evolution is generated by $e^{-t(-\Delta)^{0.75}}$. 
	
	Applying conventional transmutation methods to this specific problem is exceptionally difficult. A standard factorization would require identifying a local operator for the fractional power $-(-\Delta)^{0.75}$. Unlike the even integer-order Laplacian, this operator is intrinsically non-local; its spatial discretization results in a dense matrix. Implementing a quantum block-encoding for such a dense Hamiltonian incurs a prohibitive gate complexity, effectively nullifying the quantum advantage.

	In contrast, our sparse block-encoding framework handles this scenario without altering the underlying quantum hardware architecture or requiring dense matrix embeddings. By directly setting the root operator to the standard discrete Laplacian, $H = -\Delta$, which is a highly sparse difference operator, we can construct the target evolution $e^{-t H^{0.75}}$ using the $2\tilde{H}-I$ encoding scheme. Fast and highly efficient quantum circuits for block-encoding such structured, sparse matrices are well-established~~\cite{10.1145/3313276.3316366, Sunderhauf2024blockencoding, Camps_2022}. 
	
	For a $d$-dimensional spatial grid with mesh size $h$, the spectral norm is $\|H\| \approx 4d/h^2$. The shifted matrix $A = 2H/\|H\| - I$ consistently maintains a perfectly vanishing diagonal and a row 1-norm strictly equal to 1. This elegant structural guarantee is not a mere coincidence; it arises because $A$ is isomorphic to the negative of the transition matrix of a simple random walk on the $d$-dimensional grid~\cite{1366222}. Consequently, this strictly avoids any additional state preparation normalization overhead (i.e., the sub-normalization factor is exactly 1).
	
	Crucially, because this encoding inherently evaluates the polynomial $\cos\left(\tau\sqrt{\frac{x+1}{2}}\right)$, it enables the direct, hardware-efficient implementation of $\cos\left(\frac{2\pi k}{a}\sqrt{H}\right)$ without querying a dense fractional operator. To match the target evolution, the fractional nature of the dynamics is entirely absorbed into the classical coefficient calculation by defining the effective spectral function:
	\begin{equation}
		c_k = \frac{1}{a} f\left(\frac{k}{a}\right) \quad \text{with} \quad \hat{f}(\xi) = e^{-t|\xi|^{2 \times 0.75}} = e^{-t|\xi|^{1.5}}.
	\end{equation}
	
	Because the effective non-integer exponent $2\alpha=1.5$ places this problem in the fractional regime, the spectral function contains a branch point singularity at $\xi=0$. As established by Tauberian theorems, this dictates a heavy-tailed algebraic decay in the time domain:
	\begin{equation}
		|f(x)| \sim \frac{1}{|x|^{2\alpha+1}} = \frac{1}{|x|^{2.5}}.
	\end{equation}
	
	While this slow decay necessitates a larger truncation radius scaling as $K/a \sim \epsilon^{-\frac{1}{2\alpha}} = \epsilon^{-\frac{2}{3}}$ compared to the exponential convergence of integer-order cases, it provides a unique and viable pathway to simulate Holtsmark-type dynamics using only standard local connectivity on the quantum processor. 
	
	Applying Theorem~\ref{thm:fractional_complexity}, we obtain a quantum algorithm to prepare an $\epsilon$-approximation of the normalized solution state $\ket{u_{h}(T)}$ with $\Omega(1)$ success probability. Because our formulation directly evaluates $\sqrt{H}$, the maximum phase in the QSVT relies strictly on the square root of the spectral norm, $\sqrt{\|H\|} = \mathcal{O}(\sqrt{d}/h)$, rather than the full norm $\|H\| = \mathcal{O}(d/h^2)$. Substituting $\alpha=0.75$, the query complexity to the block-encoding oracle of the Laplacian $H$ scales polynomially with the inverse precision:
	\begin{equation}\begin{split} 
			\mathcal{O}\left( \frac{\sqrt{d}}{h} u_r^{\frac{5}{3}} \left(\frac{T}{\epsilon}\right)^{\frac{2}{3}} \right),
	\end{split}\end{equation}
	where $d$ is the spatial dimension and $h$ is the mesh size. The query complexity for the initial state preparation oracle is $\mathcal{O}(u_r)$. 
	Furthermore, the number of LCU coefficients required scales as:
	\begin{equation}
		\mathcal{O}\left( \left(\frac{u_r}{\epsilon}\right)^{\frac{2}{3}} \left( \frac{\sqrt{d}}{h} T^{\frac{2}{3}} + \left(\log \frac{u_r}{\epsilon}\right)^{\frac{2}{3}} \right) \right).
	\end{equation}

	\subsection{Matrix Polynomials}
	\label{sec:matrix_poly}
	
	We consider the implementation of a generic matrix polynomial $f(A) = \sum_{j=0}^{d} c_j A^j$. 
	To accommodate the spectrum of $A$, we utilize the block-encoding factor $\alpha_A$ ($\alpha_A \ge \|A\|$) to set the inner radius $R_1 \approx \alpha_A$. 
	For the integration contour $\Gamma$, we select a circle of radius $R_2$ strictly larger than $R_1$ ($R_2 > R_1$), but leave the ratio $R_2/R_1$ as a tunable parameter.
	
	In this general configuration, the key geometric parameters governing the algorithm's performance are interrelated. 
	Specifically, the spectral gap, representing the distance between the contour and the singularity, which is $ R_2 - R_1$. 
	This gap directly determines the upper bound of the resolvent norm $\gamma_2$ for a diagonalizable matrix $A=SDS^{-1}$, which is estimated as:
	\begin{equation}
		\gamma_2 = \max_{\abs{z}={R_2}} \|(zI - A)^{-1}\| \le \frac{\kappa_S}{R_2 - R_1},
	\end{equation}
	where $\kappa_S$ is the condition number of the eigenvector matrix. 
	Complementing this, the length of the integration contour is given by $l = 2\pi R_2$.
	
	The maximum magnitude of the function on the contour is defined as $B_2 = \max_{|z|=R_2} |f(z)|$. 
	Unlike previous analyses that bind $B_2$ strictly to the polynomial degree, we treat $B_2$ as a property dependent on the chosen radius $R_2$. 
	This allows for a flexible analysis applicable to polynomials with varying coefficient decay rates.
	
	Regarding the discretization error, the required number of sampling nodes $m$ is decoupled from the query complexity and is determined by the convergence rate of the geometric series $(R_1/R_2)^m$. 
	To satisfy the error tolerance $\epsilon$, $m$ satisfy:
	\begin{equation}
		m\sim \Ocal\left(\frac{\log(B_2/\epsilon)}{\log(R_2/R_1)}\right).
	\end{equation}
	This expression highlights that $m$ scales logarithmically with the function magnitude $B_2$ and inversely with the logarithm of the geometric ratio $R_2/R_1$. 
	While a smaller gap $R_2 - R_1$ increases the required $m$, this only impacts the classical pre-processing and the LCU control depth logarithmically, without increasing the number of queries to $U_A$.
	
	Finally, substituting these general parameters into the total query complexity theorem, we obtain the cost for implementing $f(A)$:
	\begin{equation}
		\tilde{\mathcal{O}}\left( \frac{B_2 \cdot l \cdot \gamma_2^2 \cdot \alpha_A}{\|f(A)|\psi\rangle\|} \right) 
		= \tilde{\mathcal{O}}\left( \frac{B_2 \cdot R_2 \cdot \kappa_S^2 \cdot \alpha_A}{(R_2-R_1)^2 \|f(A)|\psi\rangle\|} \right).
	\end{equation}
	This general bound reveals the intrinsic trade-off in contour selection: increasing the contour radius $R_2$ reduces the resolvent norm term ($\propto (R_2-R_1)^{-2}$), but typically increases the function magnitude $B_2$. 
	Therefore, the optimal $R_2$ is not necessarily fixed at $2R_1$ but should be chosen to minimize the product $\frac{B_2 R_2}{(R_2-R_1)^2}$ based on the specific growth behavior of the target polynomial.

	\section{DISCUSSION AND OUTLOOK}
	
	The unified PSF presented in this work fundamentally reinterprets quantum simulation errors as spectral aliasing in dual domains. By bridging the continuous time and frequency domain (Fourier-PSF) and the discrete complex spectral domain (contour-PSF), we have rigorously resolved the intrinsic smoothness-sparsity tradeoff that dictates algorithmic efficiency. This perspective not only unifies the treatment of fractional singularities and holomorphic functions under a single theoretical umbrella but also reveals deep structural connections between classical harmonic analysis and quantum matrix arithmetic.
	
	A profound implication of our Fourier-PSF approach (Path A) is its ability to achieve lower query complexity for generalized dissipative dynamics than conventional direct QSVT approaches. Standard QSVT predominantly relies on Chebyshev polynomial approximations. These are minimax optimal for oscillatory Hamiltonian simulations but can be suboptimal for nonunitary rapidly decaying functions. The fact that mapping the problem to a Fourier coefficient-weighted LCU yields superior scaling strongly suggests the existence of a fundamentally new polynomial basis. If such a novel basis can be explicitly constructed to be inherently tailored for dissipative envelopes rather than oscillatory dynamics, it could enable polynomial expansions that approximate dissipative Hermitian matrix equations at significantly lower degrees. This would potentially redefine the theoretical lower bounds for quantum nonunitary simulation complexity. Within this path, it is also necessary to clarify our physically motivated constraint of $\alpha \ge 0.5$. As the fractional order $\alpha \to 0$, the inverse scaling in the algorithmic complexity bounds induces a prohibitive computational bottleneck due to the severe mathematical singularity at the origin. Fortunately, such extreme sub-diffusive regimes lack widespread physical correspondence; thus, the $\alpha \ge 0.5$ bound naturally delineates the regime where quantum simulation is both computationally efficient and physically meaningful.
	
	Furthermore, our contour-PSF approach (Path B) currently leverages the cyclic group of roots of unity to perform a finite Poisson summation. This geometrically corresponds to uniform discrete sampling on a circular contour centered at the origin. While this group-theoretic formulation guarantees exponential convergence for spectra bounded within a standard disk, the circular geometry may not be strictly optimal for physical operators possessing highly asymmetric, elongated, or disconnected spectra. This inherent geometric limitation points to a highly promising future algorithmic direction: the integration of complex conformal mappings. By employing analytic tools such as Faber polynomials, it is mathematically viable to conformally map the exterior of an arbitrary simply connected domain (which tightly encloses the specific operator spectrum) onto the exterior of the unit disk. For instance, to accommodate an elongated spectrum, the standard circular kernel $g(z) = z^m / (z^m - R_1^m)$ can be elegantly generalized to an elliptical filter kernel $g_{\mathrm{elp}}(z) = F_m(z) / (F_m(z) - \rho^m)$, where $F_m(z)$ denotes the $m$-th order Faber polynomial. By enforcing $F_m(z) = \rho^m$, the uniformly distributed roots of unity are mapped into non-uniform discrete sampling nodes that naturally cluster at the high-curvature extremities of the physical ellipse (see Fig.~\ref{fig:ellipse_sampling}). This transformation would drastically minimize the resolvent norm bound and extend the exponential convergence guarantees of the contour-PSF to virtually any complex spectral geometry, exponentially suppressing geometric aliasing for narrow spectra.
	
	\begin{figure}[htbp]
		\centering
		\includegraphics[width=0.75\linewidth]{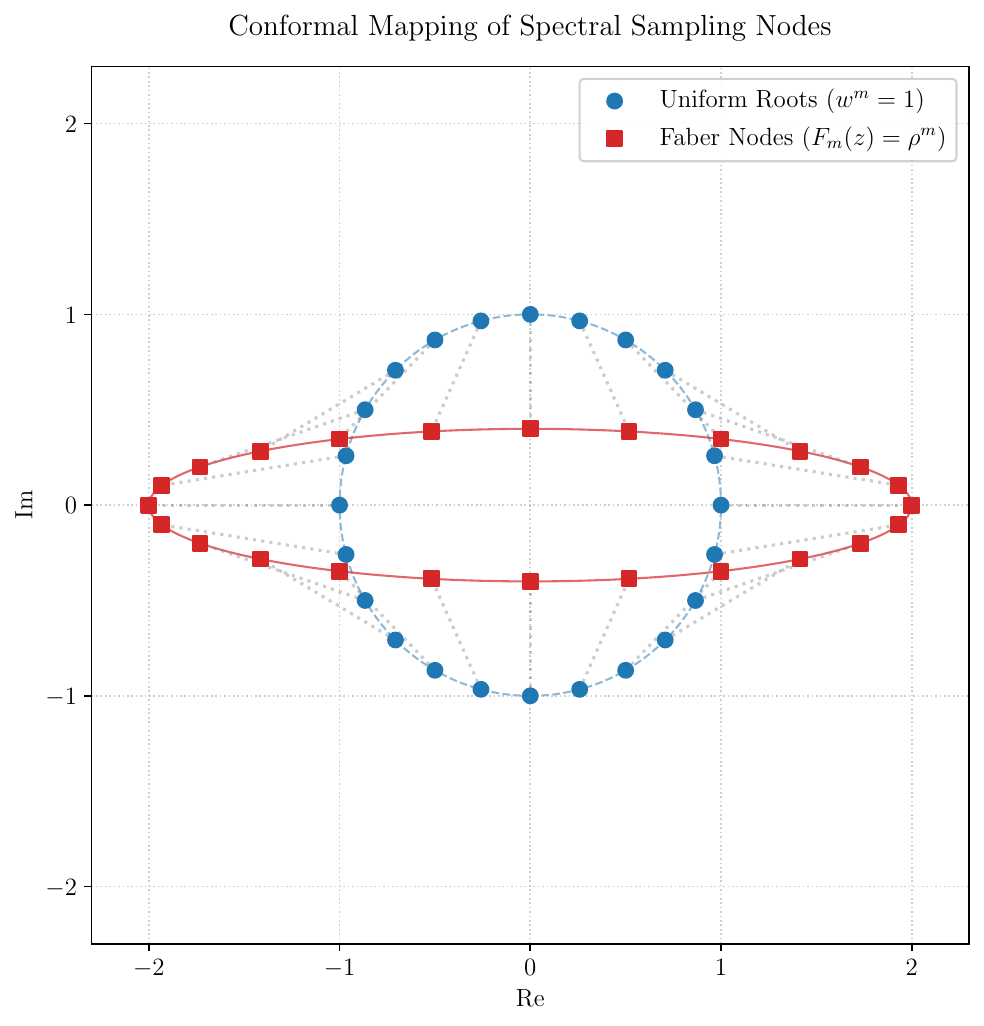}
		\caption{Geometric illustration of the conformal mapping strategy extending the contour-PSF framework. The uniformly distributed roots of unity on the reference unit circle (blue circles, $w^m = 1$) are conformally mapped to non-uniform discrete sampling nodes on the physical elliptical contour (red squares, corresponding to the roots of the Faber polynomial $F_m(z) = \rho^m$). The dotted lines trace the Joukowsky transformation, visually demonstrating how uniform phase intervals in the dual domain inherently cluster the sampling nodes at the high-curvature extremities of the elongated spectrum in the primary domain. This strategic node density dramatically minimizes the resolvent norm bound and exponentially suppresses geometric aliasing for highly asymmetric or elongated spectra, overcoming the limitations of standard circular sampling.}
		\label{fig:ellipse_sampling}
	\end{figure}
	
	Finally, by lifting the dynamical complexity from the structural matrix level into the classical spectral weight functions, our framework demonstrates exceptional hardware friendliness. The ability to bypass the construction of dense or high-order discrete operators while simulating complex superdiffusive and biharmonic phenomena using only standard local quantum primitives effectively decouples the complexity of the physical phenomenon from the hardware connectivity constraints. Future extensions of this spectral duality will naturally explore multidimensional partial differential equations and the application of discrete contour-PSF to multimatrix functions, fully unlocking the potential of Poisson summation in the design of next-generation quantum algorithms.
	
	\begin{acknowledgments}
		This work has been supported by the National Key Research and Development Program of China (Grant No. 2023YFB4502500). We are also grateful to Prof. Dong An for insightful discussions and valuable suggestions. 
	\end{acknowledgments}

	\bibliography{ref}
	
	\newpage 
	\clearpage
	\thispagestyle{empty}
	\onecolumngrid
	\appendix
	
	\section{Estimation of the $L^1$-norm of Function $f$}\label{App:EstimL1}
	
	A bound of $\mathcal{O}(\log \alpha)$ can be derived by analyzing the specific pointwise behavior of $f(x)$ in the limit of large $\alpha$. This improvement arises because $\hat{f}(\xi)$ approaches a rectangular function, and the $L^1$-norm of its inverse transform (Sinc-like kernel) is known to scale logarithmically with the smoothness bandwidth.
	
	\begin{theorem}[Logarithmic Scaling of $L^1$-Norm]\label{thm:L1}
		For the spectral function $\hat{f}(\xi) = e^{-t|\xi|^{2\alpha}}$, as the decay order $\alpha \to \infty$, the $L^1$-norm of the coefficients scales logarithmically:
		\begin{equation}
			\|f\|_{L^1} \approx \mathcal{O}(\log \alpha).
		\end{equation}
	\end{theorem}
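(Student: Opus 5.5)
First reduce to $t=1$. Substituting $\xi\mapsto t^{-1/(2\alpha)}\xi$ rescales $f$ by $f(x)=t^{-1/(2\alpha)}f_1(t^{-1/(2\alpha)}x)$, and this dilation leaves the $L^1$-norm unchanged. So it suffices to bound $\|f_1\|_{L^1}$, where
\[
f_1(x)=\int_{\mathbb{R}} e^{-|\xi|^{2\alpha}}e^{2\pi i \xi x}\,d\xi = 2\int_0^\infty e^{-\xi^{2\alpha}}\cos(2\pi\xi x)\,d\xi .
\]
The heuristic is that $e^{-\xi^{2\alpha}}$ tends to the indicator of $[-1,1]$. That indicator has inverse transform $\sin(2\pi x)/(\pi x)$, which is not integrable. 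The $L^1$-norm should therefore come from integrating $1/|x|$ up to a cutoff scale $x\sim\alpha$, set by the width $\sim1/\alpha$ of the transition layer of $\hat f$ near $|\xi|=1$. This gives $\log\alpha$. The plan is to split $\mathbb{R}$ into the regions $|x|\le 1$, $1<|x|\le C\alpha$, and $|x|>C\alpha$.

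On $|x|\le 1$, use $|f_1(x)|\le\|\hat f\|_{L^1}=2\Gamma(1+\tfrac{1}{2\alpha})\le 2$, so this region contributes $\Ocal(1)$. For the middle region, integrate by parts once in $\xi$:
\[
f_1(x)=-\frac{1}{\pi x}\int_0^\infty \frac{d}{d\xi}\bigl(e^{-\xi^{2\alpha}}\bigr)\sin(2\pi\xi x)\,d\xi .
\]
The boundary terms vanish because $\sin 0=0$ and $\hat f$ decays. The derivative $\phi(\xi)=-\tfrac{d}{d\xi}e^{-\xi^{2\alpha}}$ is a nonnegative probability density on $(0,\infty)$. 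Hence $|f_1(x)|\le 1/(\pi|x|)$, and the middle region contributes at most $\frac{2}{\pi}\log(C\alpha)=\Ocal(\log\alpha)$. This step yields the claimed logarithm and is routine.

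The tail $|x|>C\alpha$ is where I expect the main work. I would show that $\phi$ has total variation $\Ocal(\alpha)$, then integrate by parts a second time. This gives
\[
|f_1(x)|\le \frac{\|\phi'\|_{L^1}}{2\pi^2x^2}+(\text{boundary term at }0).
\]
The boundary term at $0$ involves $\phi(0)$, which vanishes for $\alpha>1/2$. Since $\phi$ is unimodal with peak near $\xi=1$ and height $\approx 2\alpha/e$, we get $\|\phi'\|_{L^1}=2\max\phi=\Ocal(\alpha)$. Integrating over $|x|>C\alpha$ then gives $\Ocal(\alpha\cdot\frac{1}{\alpha})=\Ocal(1)$.

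The obstacle is checking that $\phi$ is indeed unimodal with maximum $\Theta(\alpha)$. Substituting $u=\xi^{2\alpha}$, we have $\phi=2\alpha\xi^{2\alpha-1}e^{-\xi^{2\alpha}}$. Its logarithmic derivative vanishes exactly once, at $\xi_*^{2\alpha}=1-\tfrac{1}{2\alpha}$, where $\phi(\xi_*)\le 2\alpha/\xi_*\cdot e^{-1+1/(2\alpha)}\cdot\ldots=\Ocal(\alpha)$. Summing the three regions gives $\|f\|_{L^1}\le \tfrac{2}{\pi}\log\alpha+\Ocal(1)$, as claimed. A matching lower bound is optional; it would follow by comparison with the sinc kernel on $1<|x|<c\alpha$.
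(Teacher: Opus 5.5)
Your proof is correct, and it follows the same skeleton as the paper's:
\begin{itemize}
\item an $\Ocal(1)$ central peak;
\item a sinc-like range $1<|x|\lesssim\alpha$, whose $1/|x|$ envelope produces the logarithm;
\item a tail beyond $|x|\sim\alpha$, controlled by the $\Ocal(\alpha)$ steepness of $\hat f$ near $|\xi|=1$.
\end{itemize}

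The difference is that the paper argues only heuristically on each piece. It replaces $|f(x)|$ by $|\sin(2\pi x)/(\pi x)|$ on $|x|\le\alpha$ and averages $|\sin|$ to get the constant $4/\pi^2$. It then obtains both the cutoff $X_c\sim\alpha$ and the $\Ocal(1)$ tail from an uncertainty-principle appeal to $\max|\hat f'|\approx 2\alpha/e$, with no actual estimate.

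Your two integrations by parts turn these steps into inequalities. Positivity and unit mass of $\phi=-\tfrac{d}{d\xi}e^{-\xi^{2\alpha}}$ give $|f_1(x)|\le 1/(\pi|x|)$ for every $x$. Unimodality gives $\|\phi'\|_{L^1}=2\max\phi$, hence $|f_1(x)|\le \max\phi/(\pi^2x^2)$. This is precisely the quantitative form of the paper's transition-width heuristic.

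Your evaluation of $\max\phi$ can be tidied. With $s=1/(2\alpha)$ one gets exactly $\phi(\xi_*)=2\alpha(1-s)^{1-s}e^{-(1-s)}\le 2\alpha$, with no stray $1/\xi_*$ factor and no trailing ellipsis.

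What you give up is the paper's leading constant. An envelope-based upper bound cannot exploit the averaging of $|\sin|$, so you get $\tfrac{2}{\pi}\log\alpha+\Ocal(1)$ instead of $\tfrac{4}{\pi^2}\log\alpha$. This is immaterial for the $\Ocal(\log\alpha)$ statement. In exchange you get a genuine bound with explicit constants for every $\alpha>1/2$, rather than an asymptotic approximation whose tail estimate is only asserted.
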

	
	\begin{proof}
		We analyze the structure of $f(x)$ by decomposing the frequency domain into a flat region and a transition region. For simplicity, we assume unit time $t=1$ and standard scaling, as the $L^1$-norm is scale-invariant.
		
		As $\alpha \to \infty$, the function $\hat{f}(\xi) = e^{-|\xi|^{2\alpha}}$ converges pointwise to the rectangular function:
		\begin{equation}
			\lim_{\alpha \to \infty} \hat{f}(\xi) = \text{rect}(\xi/2) = \begin{cases} 1, & |\xi| < 1 \\ 0, & |\xi| > 1 \end{cases}.
		\end{equation}
		The inverse Fourier transform of the rectangular function is the normalized Sinc function:
		\begin{equation}
			f_{\infty}(x) = \int_{-1}^{1} e^{2\pi i x \xi} d\xi = \frac{\sin(2\pi x)}{\pi x}.
		\end{equation}
		The integral of the absolute value of the ideal Sinc function diverges logarithmically. However, for finite $\alpha$, $\hat{f}(\xi)$ is not perfectly discontinuous; it possesses a finite transition width.
		
		This finite steepness of $\hat{f}(\xi)$ acts as a high-frequency filter that suppresses the tail of $f(x)$. The steepness of the transition at $|\xi| \approx 1$ can be characterized by the maximum derivative:
		\begin{equation}
			\max |\hat{f}'(\xi)| \approx \left. \frac{d}{d\xi} e^{-\xi^{2\alpha}} \right|_{\xi \approx 1} \approx 2\alpha e^{-1} \sim \mathcal{O}(\alpha).
		\end{equation}
		This implies that the transition width in the frequency domain is $\Delta \xi \sim 1/\alpha$. According to the uncertainty principle, a feature of width $\Delta \xi$ in frequency induces decay in time starting at a critical scale $X_c \sim 1/\Delta \xi \approx \alpha$~\cite{stein1971introduction}.
		
		Specifically, we approximate $f(x)$ in two regimes: the Sinc regime ($|x| \le \alpha$) and the decay regime ($|x| > \alpha$). In the Sinc regime, the behavior is dominated by the sharp spectral cutoff, and the function closely mimics the ideal Sinc function:
		\begin{equation}
			|f(x)| \approx \left| \frac{\sin(2\pi x)}{\pi x} \right|.
		\end{equation} 
		Conversely, in the decay regime ($|x| > \alpha$), the finite transition width $\Delta \xi$ takes effect. The smoothness (differentiability) of $\hat{f}(\xi)$ causes $|f(x)|$ to decay rapidly, ensuring that the tail integral $\int_{|x| > \alpha} |f(x)| dx$ is bounded by a constant $\mathcal{O}(1)$.
		
		We estimate the dominant contribution to the $L^1$-norm by integrating the Sinc envelope up to the effective cutoff $X_c \sim \alpha$. By averaging the highly oscillatory term $|\sin(2\pi x)|$ over its period (which yields a factor of $2/\pi$), we obtain:
		\begin{equation}
			\|f\|_{L^1} \approx \int_{-X_c}^{X_c} |f(x)| dx \approx 2 \int_{1}^{\alpha} \frac{2/\pi}{\pi x} dx + \mathcal{O}(1).
		\end{equation}
		(The $\mathcal{O}(1)$ term accounts for the central peak near $x=0$ and the rapidly decaying tails). Calculating the integral yields:
		\begin{equation}\label{eq:L1est}
			\|f\|_{L^1} \approx \frac{4}{\pi^2} \log(\alpha) + \mathcal{O}(1) \sim \mathcal{O}(\log \alpha).
		\end{equation}
	\end{proof}
	
	This $\mathcal{O}(\log \alpha)$ bound is tighter than the $\mathcal{O}(\alpha^{1/4})$ bound derived from Carlson's inequality~\cite{katznelson2004introduction}. While Carlson's bound is rigorous and sufficient for convergence, the logarithmic scaling reflects the true asymptotic cost of the LCU decomposition for boxcar-like spectral filters.
	
	While the $L^1$-norm exhibits logarithmic growth for large $\alpha$, its behavior is fundamentally different in the low-decay regime where $0 < \alpha \le 1$. In this interval, the spectral function $\hat{f}(\xi) = e^{-t|\xi|^{2\alpha}}$ corresponds to the characteristic function of a symmetric Lévy $\beta$-stable distribution, with stability parameter $\beta = 2\alpha \in (0, 2]$~\cite{grubbs1967introduction}.
	
	A fundamental property of stable distributions with stability parameter $\beta \in (0, 2]$ is that their corresponding probability density functions $f(x)$ are strictly non-negative everywhere:
	\begin{equation}
		f(x) \ge 0, \quad \forall x \in \mathbb{R}.
	\end{equation}
	This positivity property drastically simplifies the $L^1$-norm calculation. Since $|f(x)| = f(x)$, the norm is determined directly by the zero-frequency component of the spectral function via the fundamental property of the Fourier transform:
	\begin{equation}
		\|f\|_{L^1} = \int_{-\infty}^{\infty} |f(x)| \, dx = \int_{-\infty}^{\infty} f(x) \, dx = \hat{f}(0).
	\end{equation}
	Given that $\hat{f}(\xi) = e^{-t|\xi|^{2\alpha}}$, we have $\hat{f}(0) = 1$. Consequently, in this low-decay regime, the amplitude amplification overhead is minimal and strictly constant:
	\begin{equation}
		\|f\|_{L^1} = 1 \sim \mathcal{O}(1).
	\end{equation}

	\section{Proof of Theorem~\ref{thm:even_alpha_complexity}}\label{Proof1}
	Let $\alpha \in \mathbb{Z}^+$ be an integer, assume we have access to the block encoding of $2\tilde{H}-I$ or $\sqrt{\tilde{H}}$, $H^\alpha=A$ and $H\succeq 0 $. The non-unitary dynamics of $u_T=e^{-T H^{\alpha}} u_0= e^{-T A} u_0$ can be solved by a quantum algorithm with precision $\epsilon$ to get normalized final state $\ket{u_T}$ with $\Omega(1)$ success probability. The query complexity to the matrix oracle for $2\tilde{H}-I$ is
	\begin{equation}
		\begin{split}
			\mathcal{O}\Bigg( u_r \log(1+\alpha) \bigg[ \|A\|^{\frac{1}{2\alpha}} T^{\frac{1}{2\alpha}} \left( \log\frac{u_r}{\epsilon} \right)^{1 - \frac{1}{2\alpha}}  + \log\frac{u_r}{\epsilon} \bigg] \Bigg),
		\end{split}
	\end{equation}
	the query complexity to the oracle for initial state preparation is $\Ocal(u_r\log(1+\alpha))$ and the number of LCU coefficients scales as:
	\begin{equation}
		\mathcal{O}\left( \|A\|^{\frac{1}{2\alpha}} T^{\frac{1}{2\alpha}} \left( \log\frac{u_r}{\epsilon} \right)^{1 - \frac{1}{2\alpha}} + \log\frac{u_r}{\epsilon} \right),
	\end{equation}  here $u_r := \|u_0\|/\|u_T\|$.
	
	\begin{proof}
		
		We consider the standard Fourier integral $f(x) = \int_{-\infty}^{\infty} e^{2\pi i x\xi - T\xi^{2\alpha}} d\xi$. For large $|x|$, the integral is dominated by the saddle point of the phase function $\Phi(\xi) = 2\pi i x\xi - T\xi^{2\alpha}$. To determine the decay envelope, we analyze the stationary point of the real magnitude function $g(\xi) = 2\pi x\xi - T\xi^{2\alpha}$. Solving $g'(\xi) = 2\pi x - 2T\alpha\xi^{{2\alpha}-1} = 0$ yields the saddle point:
		\begin{equation}
			\xi_* = \left( \frac{\pi x}{T\alpha} \right)^{\frac{1}{2\alpha-1}}.
		\end{equation}
		Substituting $\xi_*$ back into $g(\xi)$, the maximum exponent becomes:
		\begin{equation}
			g(\xi_*) = 2\pi x \xi_* - T \xi_*^{2\alpha} = ({2\alpha}-1) \left( \frac{\pi x}{\alpha  T^{1/(2\alpha)}} \right)^{\frac{2\alpha}{2\alpha-1}}.
		\end{equation}
		
		Consequently, the asymptotic behavior of the inverse Fourier transform is governed by the exponential envelope:
		\begin{equation}
			|f(x)| \sim \exp\left[ - (2\alpha-1) \left( \frac{\pi |x|}{\alpha \, T^{1/(2\alpha)}} \right)^{\frac{2\alpha}{2\alpha-1}} \right].
		\end{equation}
		
		This confirms the exponential decay stated in Eq.~\eqref{eq:poisson_operator_2}. The simulation accuracy is constrained by two error sources: domain truncation and aliasing. 
		
		We set the budget for both errors to $\epsilon'/2$. Based on the decay envelope, we require
		\begin{equation}
			\frac{1}{a} \sum_{|k| > K} f(k/a) e^{-i \frac{2\pi k}{a} \sqrt{H}} \le C\int_{K/a}^{+\infty}|f(x)|dx\leq \epsilon'/2,
		\end{equation}
		here $C$ is constant. To rigorously determine the truncation radius $K/a$ that bounds the tail integral $\mathcal{E}(K/a) = \int_{K/a}^{\infty} |f(x)| dx$ by a precision $\epsilon'/(2C)$, we start with the asymptotic decay envelope $|f(x)| \sim \exp[-\lambda x^\beta]$, where $\beta = \frac{2\alpha}{2\alpha-1}$ and $\lambda = (2\alpha-1) (\frac{\pi}{\alpha T^{1/(2\alpha)}})^\beta$. Since $\beta > 1$, the integral lacks a closed-form solution, but we can derive a strict upper bound using the identity $e^{-\lambda x^\beta} = -(\lambda \beta x^{\beta-1})^{-1} \frac{d}{dx}(e^{-\lambda x^\beta})$. Exploiting the monotonicity of the pre-factor $x^{-(\beta-1)}$ for $x \ge K/a$, we obtain the bound:
		\begin{equation}
			\mathcal{E}(K/a) < \int_{K/a}^{\infty} \frac{1}{\lambda \beta (K/a)^{\beta-1}} d\left( -e^{-\lambda x^\beta} \right) = \frac{e^{-\lambda (K/a)^\beta}}{\lambda \beta (K/a)^{\beta-1}}.
		\end{equation}
		Requiring this upper bound to be at most $\epsilon$, we take the logarithm to find $\lambda (K/a)^\beta + (\beta-1)\log (K/a) + \log(\lambda\beta) \ge \log(2C/\epsilon')$. By adopting a conservative estimation strategy that neglects the sub-dominant logarithmic terms $(\beta-1)\log (K/a)$ and $\log(\lambda\beta)$ as $(K/a)$ becomes large, the condition simplifies to $\lambda (K/a)^\beta \ge \log(2C/\epsilon')$. Solving for $(K/a)$ and substituting the physical parameters $\lambda$ and $\beta$ back into the expression yields the sufficient truncation condition:
		\begin{equation}
			K/a \ge \left( \frac{1}{\lambda} \log\frac{2C}{\epsilon'} \right)^{1/\beta} = \frac{\alpha}{\pi} \, t^{\frac{1}{2\alpha}} \left( \frac{\log(2C/\epsilon')}{2\alpha-1} \right)^{1 - \frac{1}{2\alpha}}.
		\end{equation}
		This formula confirms that $K$ scales linearly with $t^{\frac{1}{2\alpha}}$ and quasi-linearly with $\log(1/\epsilon')$, providing a computationally safe cutoff for the LCU expansion. Then we have
		\begin{equation}\label{eq:K/aeven}
			K/a\sim \Ocal\left( T^{\frac{1}{2\alpha}} \left( {\log\frac{1}{\epsilon'}} \right)^{1 - \frac{1}{2\alpha}}
			\right).
		\end{equation}
		
		The aliasing error is bounded by the sum of shifted spectral copies. Dominant contribution arises from the nearest neighbors ($n=\pm 1$):
		\begin{equation}
			\left\| \sum_{n \neq 0} e^{-T(\sqrt{H} + naI)^{2\alpha}} \right\| \le 2 \sum_{n=1}^{\infty} e^{-T n^{2\alpha} (a - \|\sqrt{H}\|)^{2\alpha}} \approx 2 e^{-T(a - \|\sqrt{H}\|)^{2\alpha}}.
		\end{equation}
		Ensuring $2 e^{-T(a - \|\sqrt{H}\|)^{2\alpha}} \le \epsilon'/2$ leads to the condition $a \sim\Ocal\left(\|\sqrt{H}\| + \left( \frac{1}{T} \log\frac{4}{\epsilon'} \right)^{\frac{1}{2\alpha}}\right) $.
		
		The LCU procedure requires amplitude amplification proportional to the $L^1$-norm of the coefficients, $\|c\|_1 \approx \int |f(x)| dx$. As $\alpha$ increases, $\hat{f}(\xi)$ approaches a rectangular function, causing $f(x)$ to exhibit Sinc-like behavior ($\sin(x)/x$). Since $\int |x|^{-1} dx$ diverges logarithmically, the norm scales as $\|c\|_1 = \mathcal{O}(\log \alpha)$ as Theorem~\ref{thm:L1} in Appendix~\ref{App:EstimL1}.
		
		To achieve a final success probability $\Omega(1)$, we employ amplitude amplification. The required number of steps is proportional to the amplification factor $u_r \|c\|_1 \approx u_r \log \alpha$. 
		
		The number of LCU coefficients (Select operator complexity) scales linearly with the truncation index $K$:
		\begin{equation}
			K\sim \frac{K}{a} \cdot a  \sim\mathcal{O}\left( \|A\|^{\frac{1}{2\alpha}} T^{\frac{1}{2\alpha}} \left( \log\frac{1}{\epsilon'} \right)^{1 - \frac{1}{2\alpha}} + \log\frac{1}{\epsilon'} \right).
		\end{equation}
		
		To analyze the quantum query complexity of implementing the target operators within the LCU framework, we consider the maximum required phase evaluated at the truncation radius $K$. Based directly on the results established in Lemma~\ref{lemma:qsvt_complexity}, achieving an approximation error bounded by $\epsilon'$ for the operator $\cos(\frac{K}{a}\sqrt{H})$ yields a total quantum query complexity of:
		\begin{equation}
			\mathcal{O}\left(\frac{K}{a}\sqrt{\|H\|} + \log\left(\frac{1}{\epsilon'}\right)\right)
		\end{equation}
		calls to the respective block-encoding oracle.
		
		Consequently, the internal precision must be adjusted to $\epsilon' = \frac{\epsilon}{u_r}$. We get the final query to $H$ for block-encoding is about
		\begin{equation}
			\mathcal{O}\left( u_r \log(1+\alpha) \cdot \left[ \|A\|^{\frac{1}{2\alpha}} T^{\frac{1}{2\alpha}} \left( \log\frac{u_r}{\epsilon} \right)^{1 - \frac{1}{2\alpha}} + \log\frac{u_r}{\epsilon} \right] \right)
		\end{equation} 
		and the total query complexity to the initial state preparation oracle scales as $O(u_r \log(1+\alpha))$, as the oracle must be queried in each of the amplitude amplification iterations.
		
		It is worth noting that Corollary~\ref{cor:even_alpha} explicitly dispenses with the assumption of positive semi-definiteness ($H \succeq 0$) for the root operator $H$. This relaxation is mathematically justified because our derivation directly evaluates the unitary evolution operators of the form $e^{-i\frac{2\pi k}{a}H}$, as formulated in Eq.~\eqref{eq:poisson_operator}. Within the QSVT framework, implementing such complex exponential functions solely requires the generator $H$ to be a Hermitian matrix, thereby entirely circumventing any positivity constraints on its spectrum.
	\end{proof}

	\section{Proof of Theorem~\ref{thm:fractional_complexity}}\label{Proof3}
	
	Let $\alpha \in \mathbb{R}^+ \setminus \mathbb{Z}$ and  be a non-integer. The non-unitary dynamics can be solved by a quantum algorithm with precision $\epsilon$ with $\Omega(1)$ success probability from Eq.~\eqref{eq:origin_problem}, assume we have access to the block encoding of $2\tilde{H}-I$ or $\sqrt{\tilde{H}}$ and $H\succeq 0$. Due to the fractional singularity, the query complexity to the matrix oracle for $H$ scales polynomially with the inverse precision:
	\begin{equation} 
		\mathcal{O}\left( u_r \log(\alpha+e) \alpha \cdot  \|A\|^{\frac{1}{2\alpha}} T^{\frac{1}{2\alpha}} \left( \frac{u_r}{\epsilon} \right)^{\frac{1}{2\alpha}} \right),
	\end{equation}
	where the dominant scaling is $\mathcal{O}(\epsilon^{-\frac{1}{2\alpha}})$. The number of LCU coefficients scales as:
	\begin{equation} 
		\mathcal{O}\left(\alpha \left(\frac{ u_r}{\epsilon}\right)^{\frac{1}{2\alpha}} \left(T^{\frac{1}{2\alpha}}\|A\|^{\frac{1}{2\alpha}}+(\log\frac{u_r}{\epsilon})^{\frac{1}{2\alpha}}\right)\right).
	\end{equation}
	And query complexity to $|u_0\rangle$ is about $\Ocal(u_r\log(e+\alpha))$, here $ u_r := \|u_0\|/\|u_T\|$.
	
	\begin{proof}
		The asymptotic behavior of the time-domain evolution kernel $f(x)$ is universally governed by the leading non-analytic term $|\xi|^{2\alpha}$ in the exponent. According to generalized Tauberian theorems, this fractional singularity dictates that the inverse Fourier transform exhibits an algebraic decay asymptotically scaling as:
		\begin{equation} \label{eq:fractional_decay}
			|f(x)| \sim \frac{C_{\alpha, T}}{|x|^{2\alpha+1}}, \quad \text{as } |x| \to \infty,
		\end{equation}
		where $C_{\alpha, T} \approx \frac{T}{\pi} \Gamma(2\alpha+1) \sin\left({\pi \alpha}\right)$~\cite{f2a4fe49-665d-3843-bb39-6a551e0f0422}. This heavy-tailed algebraic decay holds for all non-integer $\alpha$, intrinsically distinguishing this regime from the exponential convergence observed in integer orders.
		
		To determine the truncation radius $K/a$, we bound the spatial truncation error $\mathcal{E}$ by the internal precision threshold $\epsilon'$. Integrating the tail asymptotic from Eq.~\eqref{eq:fractional_decay} yields:
		\begin{equation}
			\mathcal{E} \approx 2 \int_{K/a}^{\infty} \frac{C_{\alpha, T}}{x^{2\alpha+1}} dx = 2 C_{\alpha, T} \left[ \frac{x^{-2\alpha}}{-2\alpha} \right]_{K/a}^{\infty} = \frac{C_{\alpha, T}}{\alpha (K/a)^{2\alpha}}.
		\end{equation}
		Setting $\mathcal{E} \le \epsilon'/2$ and solving for $K/a$, we derive the necessary scaling law for the truncation radius:
		\begin{equation} \label{eq:K_fractional_final}
			K/a \ge \left( \frac{2 C_{\alpha, T}}{\alpha \epsilon'} \right)^{\frac{1}{2\alpha}} \sim \mathcal{O}\left(\alpha \left(\frac{T\sin(\pi\alpha)}{\epsilon'}\right)^{\frac{1}{2\alpha}} \right).
		\end{equation}
		This result reveals the fundamental computational bottleneck of fractional dynamics: the truncation radius scales polynomially with the inverse precision $\mathcal{O}(\epsilon'^{-\frac{1}{2\alpha}})$, which is significantly more expensive than the quasi-polylogarithmic scaling achieved in analytic regimes.
		
		For the aliasing error, we analyze the spectral gap $D = a - \|\sqrt{H}\|$. The error is predominantly bounded by the nearest neighbor spectral copies ($n = \pm 1$):
		\begin{equation}
			\left\| \sum_{n \neq 0} e^{-T\| \sqrt{H} + naI \|^{2\alpha}} \right\| \le 2 e^{-T D^{2\alpha}} + 2 \sum_{n=2}^{\infty} e^{-T (nD)^{2\alpha}}.
		\end{equation}
		Using the integral bound $\int_1^{\infty} e^{-T D^{2\alpha} x^{2\alpha}} dx$ and the asymptotic expansion of the incomplete Gamma function $\Gamma(1/(2\alpha), T D^{2\alpha})$~\cite{NIST:DLMF}, we approximate the tail sum as $\frac{e^{-T D^{2\alpha}}}{2\alpha T D^{2\alpha}}$. The sufficient condition to bound the total aliasing error by $\epsilon'/2$ is given by:
		\begin{equation}
			C_{\alpha} e^{-T (a-\|\sqrt{H}\|)^{2\alpha}} \le \epsilon'/2 \implies a \sim \mathcal{O}\left(\|\sqrt{H}\| + \left( \frac{1}{T}\log \frac{1}{\epsilon'} \right)^{\frac{1}{2\alpha}} \right),
		\end{equation}
		where $C_{\alpha} \approx 2\left(1 + \frac{1}{2\alpha \log(1/\epsilon')}\right)$.
		
		Finally, we combine the truncation radius and the sampling rate, accounting for the amplitude amplification overhead derived in Theorem~\ref{thm:L1} of Appendix~\ref{App:EstimL1}. To achieve a final success probability of $\Omega(1)$, we must adjust the internal precision to $\epsilon' = \frac{\epsilon}{u_r}$. Based directly on the results established in Lemma~\ref{lemma:qsvt_complexity}, implementing the target operator $\cos(\frac{2\pi K}{a}\sqrt{H})$ to precision $\epsilon'$ requires a quantum query complexity of $\mathcal{O}(\frac{K}{a}\sqrt{\|H\|} + \log(\frac{1}{\epsilon'}))$ calls to the respective block-encoding oracle. Substituting the asymptotic bound for $K/a$, recognizing that $\sqrt{\|H\|} = \|A\|^{\frac{1}{2\alpha}}$, and multiplying by the $\mathcal{O}(u_r \log(\alpha+e))$ amplitude amplification iterations, we obtain the final query complexity to the block-encoding oracle:
		\begin{equation} 
			\mathcal{O}\left( u_r \log(\alpha+e) \alpha \cdot  \|A\|^{\frac{1}{2\alpha}} T^{\frac{1}{2\alpha}} \left( \frac{u_r}{\epsilon} \right)^{\frac{1}{2\alpha}} \right).
		\end{equation}
		Correspondingly, the initial state preparation requires $\mathcal{O}(u_r \log(\alpha+e))$ queries. The total number of LCU coefficients is bounded by $K = (K/a) \cdot a$, which is dominated by the polynomial term from $K/a$, yielding:
		\begin{equation} 
			\mathcal{O}\left(\alpha \left(\frac{ u_r}{\epsilon}\right)^{\frac{1}{2\alpha}} \left(T^{\frac{1}{2\alpha}}\|A\|^{\frac{1}{2\alpha}}+\left(\log\frac{u_r}{\epsilon}\right)^{\frac{1}{2\alpha}}\right)\right).
		\end{equation}
	\end{proof}
\end{document}